\newtheorem{proposition}{Proposition}[section]
\newtheorem{remark}{Remark}[section]
\newtheorem*{conjecture*}{Conjecture}
\title{A Shallow Water Model for Water Flow on Vegetated
  Hillslope}
\author{Stelian Ion$^*$, Dorin Marinescu$^*$, Stefan-Gicu
  Cruceanu\footnote{{Gheorghe Mihoc-Caius Iacob''
      Institute of Mathematical Statistics and Applied
      Mathematics, Romanian Academy, 050711 Bucharest,
      Romania, {\tt emails: ro\_diff@yahoo.com,
        marinescu.dorin@ismma.ro, gcruceanu@ismma.ro}.}}}
\date{}
\begin{document}
\maketitle

\begin{abstract}
  The hillslope hydrological processes are very important in
  watershed hydrology research.  In this paper we focus on
  the water flow over the soil surface with vegetation in a
  hydrographic basin.  We introduce a PDE model based on
  general principles of fluid mechanics where the unknowns
  are water depth and water velocity.  The influence of the
  plant cover to the water dynamics is given by porosity (a
  quantity related to the density of the vegetation), which
  is a function defined over the hydrological basin.  Using
  finite volume method for approximating the spatial
  derivatives, we build an ODE system which constitutes the
  base of the discrete model we will work with.  We discuss
  and investigate several physical relevant properties of
  this model.  Finally, we perform different quantitative
  validation tests by comparing numerical results with exact
  solutions or with laboratory measured data.  We also
  consider some qualitative validation tests by numerically
  simulating the flow on a theoretical vegetated soil and on
  a real hydrographic basin.\\
  {\bf Keywords:} hydrological process, non-homogeneous
  hyperbolic system, shallow water equations, finite volume
  method, well-balanced scheme, porosity\\
  {\bf 2010 MSC:} 35L03, 35L60, 35Q35, 65M08, 74F10.
\end{abstract}

\section{Introduction}
The mathematical modeling of hydrodynamic processes in
hydrographic basins is a challenging subject.  There are two
main difficulties: the complexity of the phenomena and the
multitude of the factors involved in the process.  However,
there are plenty of performant models dedicated to some
specific aspects of the hydrodynamic processes only.  To
review the existent mathematical models is beyond the
purposes of this paper, but we can group them into two large
classes: physical base models and regression models.  The
most known regression models are the unit hydrograph
\cite{dooge} and universal soil loss \cite{rusle, wisch}.
From the first class, we mention here a few well known
models: SWAT \cite{swat}, SWAP \cite{swap} and KINEROS
\cite{kineros}.  Due to the complexity and heterogeneity of
the processes (see \cite{mcdonnel}), models in this class
are not purely physical because they need additional
empirical relations.  The main difference between models
here is given by the nature of the empirical relations.  For
example, in order to model the surface of the water flow,
SWAP and KINEROS use a mass balance equation and a closure
relation, while SWAT combines the mass balance equation with
the momentum balance equation.  A very special class of
models are cellular automata which combine microscale
physical laws with empirical closure relations in a specific
way to build up a macroscale model, e.g. CAESAR
\cite{caesar, sds-ose}.  In this paper, we introduce a model
that extends the shallow water equations and takes into
account the presence of vegetation on the soil surface.
Such kind of extension is not new; we remind here some
applications that use similar models: flow through rigid
vegetation \cite{minh}, flash food propagation in the urban
area \cite{GUINOT201240, GUINOT2017133} and tsunami advance
in the coastal region \cite{TANG201321}.  Our model results
as an asymptotic limit of a space averaged version of the
mass conservation equation and linear momentum balance
equations \cite{imc-act, imc-rap}.  It takes into account
the plant vegetation density, water-plant interaction, soil
topography and water-soil interaction.  One note that if the
cover plant is uniform distributed on the soil, there are
not variations in the plant type and plant density, the
model reduces to the classical shallow water equation.  For
numerical applications, we introduce a discrete version of
the model.  First, we use a finite volume method to
discretize the space variable and space derivatives.  The
resulting ordinary differential system of equations sets up
the base of the full discrete model.  Then, we introduce a
fractional time step scheme to discretize the time variable
and time derivatives appearing in the ODE model.  Since
water flow is the main driving force in more complex
phenomena like soil erosion or pollutant spreading, one
needs a water flow modulus to be easily incorporated in a
large computational package.  For this reason, we develop a
simple discrete model with low algebraic calculations and
reduced memory requirements.  For a coupled model water flow
- soil erosion and for the influence of the emergent
vegetation on soil erosion, the reader is referred to
\cite{sander} and \cite{armanini}, respectively.

Acknowledging that there must be a balance between the
accuracy of the model and the computational cost of it, we
underline that we will focus on a first order numerical
method dedicated especially to practical applications on the
field.  This option is a compromise between the precision of
the measured data on the field and the accuracy of the
method.  High order schemes require higher computing
resources and can become prohibitive (due to the enormous
increase of the volume of processed data) when they are
applied at the natural scale of hydrographic basins.

In Section~\ref{sect_ShalowWaterEquations} we introduce the
PDE model; the reader is referred to \cite{imc-act, imc-rap}
for more details concerning this model.
Section~\ref{sect_NumericalApproximation} is dedicated to
the numerical approximation of the solution for this model.
Section~\ref{sect_FVMapproximationof2Dmodel} deals with the
FVM space discretization of the continuum PDE model.  The
method is described in some details to facilitate a less
familiar reader with the FVM method to understand the
discrete version of the continuum model and to allow him to
develop a proper approximation scheme.  In Section
\ref{sect_PropOfSemidiscreteScheme} we investigate some
physical relevant qualitative properties of this ODE system:
monotonicity of the energy, positivity of the water depth
function $h$ and well-balanced properties of the scheme.  In
Section \ref{sect_FractionalSteptimeSchemes} we obtain the
full discrete version of our continuous model.  In fact,
this last version provides the numerical result of the
model.

Section~\ref{sect_ModelValidation} is devoted to the
validation of the numerical method and the continuum PDE
model.  Section~\ref{sect_InternalValidation} deals with the
internal validation.  To analyze the performance of the
numerical scheme, we compare the numerical results with the
exact solutions of PDE in three cases: flow over a bump, the
Tacker solution and the solution of the Riemann problem of
the shallow water equations.  The solutions of Riemann
problem of the shallow water equation with topography and
vegetation require a deep discussion.  If in the case of the
shallow water equation without topography and vegetation
there is no doubt concerning the solution of the problem, in
the case of shallow water equations with topography and
vegetation the definition of solution of the problem require
new concepts, {\it measure solution} and {\it path
  connection}, \cite{maso, sds-riemann-constanta}.  When we
compare a numerical solution with a theoretical one, there
are two relevant aspects for us: the dependence of the
theoretical solution on the definition of the path
connection and the non-uniqueness of the theoretical
solution.

Four more tests are considered in
Section~\ref{sect_ExternalValidation} for external
validation.  The first two tests are performed in order to
compare the numerical results with measured data in
laboratory.  The third one is a qualitative test and
emphasizes the influence of the plant cover on the
asymptotic behavior of the dynamical system defined by the
numerical scheme.  The last test is a numerical simulation
that mimics a real phenomenon, the dumping effect of the
vegetation on the downhill water flow.

Last section is dedicated to final remarks and conclusions.

\section{Mathematical Model}
\label{sect_ShalowWaterEquations}
The model we discuss here is a simplified version a more
general model of water flow on a hillslope introduced in
\cite{imc-rap}.  Let $\Omega\in \mathbb{R}^2$ be a connected
bounded open set.  Assume that the soil surface is
represented by
\begin{equation*}
  x^3=z(x^1,x^2), \quad (x^1,x^2)\in\Omega
\end{equation*}
and the first derivatives of the function $z(\cdot,\cdot)$
are small quantities.  The unknown variables of the model
are the water depth $h(t,x)$ and the two components
$v_a(t,x)$ of the water velocity $\boldsymbol{v}$. The
density of the plant cover is quantified by the porosity
function $\theta:\Omega\rightarrow [0,1]$ which equals to
$1$ for bare soil and $0$ for a complete sealant cover
plant.  The model reads as
\begin{equation}
  \label{swe_vegm_rm.02}
  \begin{split}
    \partial_t(\theta h)+\partial_b(\theta h v^b) & = \mathfrak{M},\\
    \partial_t(\theta hv_a)+\partial_b(\theta
    hv_av^b)+\theta h\partial_aw & =
    \mathfrak{t}^{p}_a+\mathfrak{t}^{s}_a, \quad a=1,2,
  \end{split}
\end{equation}
where
\[ w=g\left[z(x^1,x^2)+h\right] \]
stands for the free water surface level, and $g$ for the
gravitational acceleration.  The contribution of rain and
infiltration to the water mass balance is taken into account
by $\mathfrak{M}$.

The term $\mathfrak{t}^p_a$ quantifies the resistance
opposed by plants to the water flow.  We use an experimental
relation as in \cite{baptist, nepf}
\begin{equation}
  \label{eq_tpa}
  \mathfrak{t}^{p}_a=\alpha_p h \left(1-\theta\right)|\boldsymbol{v}|v_a,
\end{equation}
where $\alpha_p$ is a non-negative constant depending on the
type of vegetation.

The term $\mathfrak{t}^s_a$ quantifies the frictional force
exerted by the interaction water-soil.  In this case, one
can use experimental relations as Manning, Ch\'ezy or
Darcy-Weisbach formula.  The Darcy-Weisbach expression has
the advantage to be non-singular if the water depth becomes
zero.  We analyze the model for
\begin{equation}
  \label{eq_tsa}
  \mathfrak{t}^{s}_a=\theta \alpha_s (h) |\boldsymbol{v}|v_a,
\end{equation}
where $\alpha_s (h)$ is a non-negative function
characteristic to a given soil surface.

Combining the two relations, one can write
\begin{equation}
  \label{eq_tau}
 \mathfrak{t}^{p}_a+\mathfrak{t}^{s}_a  := -{\cal K}(h,\theta)|\boldsymbol{v}|v_a,
\end{equation}
with the function ${\cal K}(h,\theta)$ given by
\begin{equation}
  \label{swe_vegm_rm.03} 
  {\cal K}(h,\theta) = \alpha_p h \left(1-\theta\right) +
  \theta \alpha_s (h).
\end{equation}

Note that there is an energy function ${\cal E}$ given by
\begin{equation}
  \label{swe_vegm_rm.02-0}
  {\cal E} := \frac{1}{2}|\boldsymbol{v}|^2+g\left(x^3+\frac{h}{2}\right),
\end{equation}
satisfying the conservative equation
\begin{equation}
  \label{swe_veg_numerics.08}
  \partial_t (\theta h{\cal E}) +
  \partial_b \left(\theta h v^b \left({\cal E}+g\frac{h}{2}\right)\right) =
  \mathfrak{M}\left(-\frac{1}{2}|\boldsymbol{v}|^2+w\right) -{\cal K}|\boldsymbol{v}|^3.
\end{equation}
In the absence of mass source, a steady state of the system
is given by
\begin{equation}
  \label{swe_veg_numerics.08-01}
  \partial_a(x^3+h)=0, \quad v_a=0, \quad a=1,2,
\end{equation}
whose physical meaning is a lake.

When dealing with constant porosity function $\theta < 1$,
the system (\ref{swe_vegm_rm.02}) is similar to the
classical shallow water equations, while for $\theta = 1$,
we recover the classical shallow water equations.

The model (\ref{swe_vegm_rm.02}) is a hyperbolic system of
equations with source term, see \cite{imc-act}.  Among
different features, we require from our approximation scheme
to have two properties with physical significance:
$h$-positivity and well-balanced (the numerical solution
preserves the lake).  We speculate that the lake is an
attractive steady state of the numerical scheme if one has a
conservative equation for the energy as in the continuous
case.  Schemes with similar properties but in the absence of
vegetation were investigated in \cite{seguin, noelle,
  nordic}.

\section{Numerical Approximation}
\label{sect_NumericalApproximation}

\subsection{Finite Volume Method Approximation of 2D Model}
\label{sect_FVMapproximationof2Dmodel}
Let $\Omega$ be the domain of the space variables $x^1$,
$x^2$ and $\Omega=\cup_i \omega_i, i=\overline{1,N}$ an
admissible polygonal partition, \cite{veque}.  To build a
spatial discrete approximation of the model
(\ref{swe_vegm_rm.02}), one integrates the continuous
equations on each finite volume $\omega _i$ and then defines
an approximation of the integrals.

Let $\omega_i$ be an arbitrary element of the partition.
Relatively to it, the integral form of
(\ref{swe_vegm_rm.02}) reads as \def\msr#1{{\rm m(#1)}}
\begin{equation}
  \label{fvm_2D_eq.01}
  \begin{split}
    \displaystyle\partial_t\int\limits_{\omega_i}\theta h{\rm d}x+
    \int\limits_{\partial\omega_i}\theta h \boldsymbol{v}\cdot\boldsymbol{n}{\rm d}s=
    &\displaystyle\int\limits_{\omega_i}\mathfrak{M}{\rm d}x,\\
      \displaystyle\partial_t\int\limits_{\omega_i}\theta h v_a{\rm d}x+
      \int\limits_{\partial\omega_i}\theta h v_a\boldsymbol{v}\cdot\boldsymbol{n}{\rm d}s+
      \int\limits_{\omega_i}\theta h\partial_a w{\rm d}x=
      &\displaystyle-\int\limits_{\omega_i}{\cal K}|\boldsymbol{v}|v_a{\rm d}x, \quad a=1,2.
  \end{split}
\end{equation}

Now, we build a discrete version of the integral form by
introducing some quadrature formulas.  With $\psi_i$
standing for some approximation of $\psi$ on $\omega_i$, we
introduce the approximations
\begin{equation}
  \label{fvm_2D_eq.01-01}
  \int\limits_{\omega_i}\theta h{\rm d}x\approx\sigma_i\theta_ih_i,\quad
  \int\limits_{\omega_i}\theta h v_a{\rm d}x\approx\sigma_i\theta_ih_iv_{a\,i},\quad
  \int\limits_{\omega_i}{\cal K}|\boldsymbol{v}|v_a{\rm d}x\approx\sigma_i {\cal K}_i|\boldsymbol{v}|_iv_{a\,i},
\end{equation}
where $\sigma_i$ denotes the area of the polygon $\omega_i$.

The integrals of the gradient of the free surface are
approximated by
\begin{equation}
  \label{fvm_2D_eq.01-02}
  \int\limits_{\omega_i}\theta h\partial_a w{\rm
    d}x=
\int\limits_{\partial\omega_i}(w-w_i) \theta hn_a{\rm d}s
 +\mathcal{O}(|\omega_i|h^2)
\approx\int\limits_{\partial\omega_i} (w-w_i) \theta hn_a{\rm d}s,
\end{equation}
where $w_i$ is an approximation of $w$ on $\omega_i$.

Note that if $\omega_i$ is a regular polygon and $w_i$ is
the cell-centered value of $w$, then the approximation is of
second order accuracy for smooth fields and it preserves the
null value in the case of constant fields $w$.

Let $\partial\omega(i,j)$ be the common interface between
the cells $\omega_i$ and $\omega_j$.  We introduce the
notation
\begin{equation}
  \label{fvm_2D_eq.psi}
  \widetilde{\psi}|_{\partial \omega(i,j)}:=\int\limits_{\partial \omega(i,j)}\psi{\rm d}s.
\end{equation}

Using the approximations (\ref{fvm_2D_eq.01-01}) and
(\ref{fvm_2D_eq.01-02}) and keeping the boundary integrals,
one can write
\begin{equation}
  \label{fvm_2D_eq.02}
  \begin{split}
    \sigma_i\partial_t(\theta_i h_i) +
    \sum\limits_{j\in{\cal N}(i)}\widetilde{\theta h v_n}|_{\partial \omega(i,j)}
    &=\sigma_i\mathfrak{M}_i,\\
    \sigma_i\partial_t (\theta_i h_i v_{a\,i}) +
    \sum\limits_{j\in{\cal N}(i)}\widetilde{\theta h v_a v_n}|_{\partial \omega(i,j)}+ 
    \sum\limits_{j\in{\cal N}(i)}\widetilde{(w-w_i)\theta h}n_a|_{\partial \omega(i,j)}
    &=-\sigma_i{\cal K}_i|\boldsymbol{v}|_iv_{a\,i},
  \end{split}
\end{equation}
where ${\cal N}(i)$ denotes the set of indexes of all the
neighbors of $\omega_i$.

The next step is to define the approximations of the
boundary integrals in (\ref{fvm_2D_eq.02}). We approximate
an integral $\widetilde{\psi}|_{\partial \omega(i,j)}$ of
the form (\ref{fvm_2D_eq.psi}) by considering the integrand
$\psi$ to be a constant function
$\psi_{(i,j)}(\psi_i,\psi_j)$, where $\psi_i$ and $\psi_j$
are some fixed values of $\psi$ on the adjacent cells
$\omega_i$ and $\omega_j$, respectively.  Thus,
\begin{equation}
  \label{fvm_2D_eq.03}
  \begin{array}{l}
    \widetilde{\theta h v_n}|_{\partial \omega(i,j)}\approx l_{(i,j)}\theta h_{(i,j)} (v_n)_{(i,j)},\\
    \widetilde{\theta h v_a v_n}|_{\partial \omega(i,j)}\approx l_{(i,j)}\theta h_{(i,j)} (v_a)_{(i,j)}(v_n)_{(i,j)},\\
    \widetilde{(w-w_i)\theta h}n_a|_{\partial \omega(i,j)} \approx l_{(i,j)}(w_{(i,j)}-w_i) \theta h^s_{(i,j)}(n_a)_{(i,j)},
   \end{array}
\end{equation} 
where $\boldsymbol{n}_{(i,j)}$ denotes the unitary normal to
the common side of $\omega_i$ and $\omega_j$ pointing
towards $\omega_j$, and $l_{(i,j)}$ is the length of this
common side.

The issue is to define the interface value functions
$\psi_{(i,j)}(\psi_i,\psi_j)$ such that the resulting scheme
has certain desired properties.  Among them, we impose the
scheme to be well-balanced and to preserve the positivity of
the water depth function $h$.

\medskip\noindent
{\bf Well-balanced and h-positive scheme.}  For any
internal interface $(i,j)$, we define the following
quantities:
\begin{equation}
  \label{fvm_2D_eq.04}
  \begin{array}{l}
    (v_a)_{(i,j)}=\displaystyle\frac{v_{a\,i}+v_{a\,j}}{2}, \quad a=1,2,\\
    (v_n)_{(i,j)}=\boldsymbol{v}_{(i,j)}\cdot\boldsymbol{n}_{(i,j)},\\
    w_{(i,j)}=\displaystyle\frac{w_{i}+w_{j}}{2},
  \end{array}
\end{equation}
and 
\begin{equation}
  \label{fvm_2D_eq.06}
  \theta h_{(i,j)}=
  \left\{
    \begin{array}{ll}
      \theta_i h_i, & {\rm if}\; (v_n)_{(i,j)}>0,\\
      \theta_j h_j, & {\rm if}\; (v_n)_{(i,j)}<0. 
    \end{array}
  \right.
\end{equation}

For the term $(\theta h)^s_{(i,j)}$, we analyze the
alternatives
\begin{equation}
  \label{fvm_2D_eq.05}
  (\theta h)^s_{(i,j)} =
  \left\{
    \begin{array}{ll}
      \theta h_{(i,j)}, & {\rm if}\; (v_n)_{(i,j)}\neq 0\\
      \theta_i h_i, & {\rm if}\; (v_n)_{(i,j)}=0 \;{\rm and}\; w_i>w_j\\
      \theta_j h_j, & {\rm if}\; (v_n)_{(i,j)}=0 \;{\rm and}\; w_i\leq w_j
    \end{array},
  \right.
\end{equation}
\begin{equation}
  \label{fvm_2D_eq.05_varianta2}
  (\theta h)^s_{(i,j)} =
  \left\{
    \begin{array}{ll}
      \theta_i h_i, & {\rm if}\; w_i>w_j\\
      \theta_j h_j, & {\rm if}\; w_i\leq w_j
    \end{array}
  \right.
\end{equation}
or  
\begin{equation}
  \label{fvm_2D_eq.05_varianta3}
  (\theta h)^s_{(i,j)} = \frac{\theta_i h_i+\theta_j h_j}{2}.
\end{equation}

The semidiscrete scheme takes now the form of the following
system of ODEs
\begin{equation}
  \label{fvm_2D_eq.07}
  \begin{split}
    \sigma_i\displaystyle\frac{\rm d}{{\rm d}t}
    (\theta_i h_i)+\sum\limits_{j\in{\cal N}(i)}l_{(i,j)}\theta h_{(i,j)} (v_n)_{(i,j)}&=\sigma_i\mathfrak{M}_i,\\
    \sigma_i\displaystyle\frac{\rm d}{{\rm d}t}
    (\theta_i h_i v_{a\,i})
    +\sum\limits_{j\in{\cal N}(i)}l_{(i,j)}\theta h_{(i,j)} (v_a)_{(i,j)} (v_n)_{(i,j)}+&\\ 
    +\displaystyle\frac{1}{2}\sum\limits_{j\in{\cal N}(i)}l_{(i,j)}(w_j-w_i)(\theta h)^s_{(i,j)} n_a|_{(i,j)}&=-\sigma_i{\cal K}_i|\boldsymbol{v}|_iv_{a\,i}.
  \end{split}
\end{equation}

\medskip\noindent {\bf Boundary conditions. Free discharge.}
We need to define the values of $h$ and $\boldsymbol{v}$ on
the external sides of $\Omega$.  For each side in
$\Gamma=\partial\Omega$, we introduce a new cell (``ghost''
element) adjacent to the polygon $\omega_i$ corresponding to
that side.  For each ``ghost'' element, one must somehow
define its altitude and then we set zero values to its water
depth.  We can now define $h$ and $\boldsymbol{v}$ on the
external sides of $\Omega$ by
\begin{equation}
  \label{fvm_2D_eq.070}
  \begin{array}{l}
    \boldsymbol{v}_{\partial \omega_i\cap \Gamma}=\boldsymbol{v}_i,\\
    h_{\partial \omega_i\cap \Gamma}=
    \left\{
    \begin{array}{ll}
      h_i, & {\rm if}\; \boldsymbol{v}_i\cdot \boldsymbol{n}|_{\partial \omega_i\cap \Gamma }>0,\\
      0, & {\rm if}\; \boldsymbol{v}_i\cdot \boldsymbol{n}|_{\partial \omega_i\cap \Gamma }<0.
    \end{array}
    \right.
  \end{array}
\end{equation}
Now, the solution is sought inside the positive cone
$h_i>0, \; i=\overline{1,N}$.

\subsection{Properties of the Semidiscrete Scheme}
\label{sect_PropOfSemidiscreteScheme}
The ODE model (\ref{fvm_2D_eq.07}) is the basis of the
numerical scheme for obtaining a numerical solution of the
PDE model.  Therefore, it is worthwhile to analyze the
properties of (\ref{fvm_2D_eq.07}).  Unfortunately, not all
the properties of the solution of the ODE model can be
preserved by its numerical approximation.  Numerical schemes
preserving properties of some particular solutions of the
continuum model were and are intensively investigated in the
literature \cite{seguin, nordic, bouchut-book,
  well-balanced}.  We analyze such properties for the
semi-discrete scheme in the present section and dedicate the
next section to the properties of the complete discretized
scheme.  Depending on the scheme used to approximate the
terms $(\theta h)^s_{(i,j)}$, the ODE model
(\ref{fvm_2D_eq.07}) has different properties.  If one uses
the formula (\ref{fvm_2D_eq.05}), then the ODE model can
have discontinuities in r.h.s. and therefore it is possible
that the solution in the classical sense of this system
might not exist for some initial data.  However, the
solution in Filipov sense \cite{filipov} exists for any
initial data.  There are initial data for which the solution
in the classical sense exists only locally in time.  As a
reward, there is an energy balance equation and the energy
is a monotone function with respect to time.  On the other
hand, if the alternative formula
(\ref{fvm_2D_eq.05_varianta2}) or
(\ref{fvm_2D_eq.05_varianta3}) is used, then the r.h.s. of
the ODE model is defined by continuum functions, but we can
not anymore find an energetic function.

\subsubsection{Energy Balance}
Definitions (\ref{fvm_2D_eq.04}) and (\ref{fvm_2D_eq.05})
yields a dissipative equation in conservative form for the
cell energy ${\cal E}_i$,
\begin{equation}
  \label{fvm_2D_eq.060}
  {\cal E}_i(h_i,\boldsymbol{v}_i)=\theta_i\left(\frac{1}{2}{|\boldsymbol{v}|^2_i}{h_i}+\frac{1}{2}gh^2_i+gx^3_ih_i\right). 
\end{equation}
The time derivative of ${\cal E}_i$ can be written as
\begin{equation}
  \label{fvm_2D_eq.0700}
  \sigma_i \displaystyle\frac{{\rm d }{{\cal E}_i} }{ {\rm d} t} = \sigma_i
  \left(
    \left(w_i-\frac{1}{2}|\boldsymbol{v}|^2_i\right) 
\displaystyle\frac{{\rm d }{(\theta_i h_i)}}{ {\rm d} t} 
    +\left< \boldsymbol{v}_i, \displaystyle\frac{{\rm d }{(\theta_ih_i\boldsymbol{v}_i)}}{ {\rm d} t} \right>
  \right),
\end{equation}
where $\left<\cdot,\cdot\right>$ denotes the euclidean
scalar product in $\mathbb{R}^2$.

\begin{proposition}[Cell energy equation]
  \label{cell_energy}
  In the absence of mass source, one has
  \begin{equation}
    \label{fvm_2D_eq.071}
    \sigma_i\displaystyle\frac{\rm d}{{\rm d}t}{\cal E}_i+\sum\limits_{j\in{\cal N}(i)}l_{(i,j)}\left<{\cal H}_{(i,j)},\boldsymbol{n}_{(i,j)}\right>=-\sigma_i{\cal K}_i|\boldsymbol{v}|^3_i,
  \end{equation}
  where
  \begin{equation*}
    {\cal H}_{(i,j)}=\frac{1}{2}\theta h_{(i,j)}
    \left(
      w_i\boldsymbol{v}_i+w_j\boldsymbol{v}_j+\left<\boldsymbol{v}_i,\boldsymbol{v}_j\right>\boldsymbol{v}_{(i,j)}
    \right).
  \end{equation*}
\end{proposition}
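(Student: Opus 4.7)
The plan is to obtain \eqref{fvm_2D_eq.071} by direct substitution: replace, in the chain-rule identity \eqref{fvm_2D_eq.0700}, the time derivatives $\tfrac{d}{dt}(\theta_i h_i)$ and $\tfrac{d}{dt}(\theta_i h_i \boldsymbol{v}_i)$ by the right-hand sides of the semidiscrete system \eqref{fvm_2D_eq.07} taken with $\mathfrak{M}_i=0$, and then reorganize the resulting interface sums. The friction contribution falls out immediately: contracting $-\sigma_i\mathcal{K}_i|\boldsymbol{v}|_i v_{a,i}$ with $v_{a,i}$ and summing on $a$ produces the right-hand side $-\sigma_i\mathcal{K}_i|\boldsymbol{v}|_i^3$ in one step, so all that remains is to match the advective and pressure contributions, interface by interface, with $l_{(i,j)}\langle \mathcal{H}_{(i,j)}, \boldsymbol{n}_{(i,j)}\rangle$.

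The first reduction is a kinetic cancellation. Using the average formula \eqref{fvm_2D_eq.04}, write $\langle \boldsymbol{v}_i, \boldsymbol{v}_{(i,j)}\rangle = \tfrac12|\boldsymbol{v}|_i^2 + \tfrac12\langle \boldsymbol{v}_i, \boldsymbol{v}_j\rangle$; the $\tfrac12|\boldsymbol{v}|_i^2$-piece coming from the momentum advective flux then cancels exactly the $-\tfrac12|\boldsymbol{v}|_i^2$-piece that comes from the prefactor $(w_i - \tfrac12|\boldsymbol{v}|_i^2)$ acting on the mass flux. What is left of the advective family on edge $(i,j)$ is $\tfrac12\theta h_{(i,j)}\langle \boldsymbol{v}_i, \boldsymbol{v}_j\rangle (v_n)_{(i,j)}$, which is exactly the $\langle \boldsymbol{v}_i, \boldsymbol{v}_j\rangle \boldsymbol{v}_{(i,j)}$ contribution to $\mathcal{H}_{(i,j)}$.

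The second reduction concerns the surviving free-surface pair
\[
w_i\,\theta h_{(i,j)} (v_n)_{(i,j)} \;+\; \tfrac12(w_j - w_i)(\theta h)^s_{(i,j)}\,\langle \boldsymbol{v}_i, \boldsymbol{n}_{(i,j)}\rangle,
\]
which should coincide with the $w$-dependent part of $\langle \mathcal{H}_{(i,j)}, \boldsymbol{n}_{(i,j)}\rangle$. Here the specific prescription \eqref{fvm_2D_eq.05} enters crucially: whenever $(v_n)_{(i,j)}\neq 0$ one has $(\theta h)^s_{(i,j)} = \theta h_{(i,j)}$, so the common factor $\theta h_{(i,j)}$ can be pulled out and the identity $(v_n)_{(i,j)} = \tfrac12(\langle \boldsymbol{v}_i, \boldsymbol{n}_{(i,j)}\rangle + \langle \boldsymbol{v}_j, \boldsymbol{n}_{(i,j)}\rangle)$ from \eqref{fvm_2D_eq.04} collapses the bracket to the required symmetric combination of $\langle \boldsymbol{v}_i, \boldsymbol{n}_{(i,j)}\rangle$ and $\langle \boldsymbol{v}_j, \boldsymbol{n}_{(i,j)}\rangle$, matching $\mathcal{H}_{(i,j)}$.

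The main obstacle is the degenerate interface $(v_n)_{(i,j)} = 0$, where $\theta h_{(i,j)}$ is not specified by \eqref{fvm_2D_eq.06} but $(\theta h)^s_{(i,j)}$ is fixed by the upwind-in-$w$ tie-break of \eqref{fvm_2D_eq.05}. The advective fluxes drop out at such an interface, but the pressure term $\tfrac12(w_j-w_i)(\theta h)^s_{(i,j)}\langle \boldsymbol{v}_i, \boldsymbol{n}_{(i,j)}\rangle$ does not, and one must verify by a direct case analysis on the sign of $w_i - w_j$ that the prescribed tie-break is exactly what keeps agreement with $\langle \mathcal{H}_{(i,j)}, \boldsymbol{n}_{(i,j)}\rangle$. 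This compatibility with the cell energy equation is precisely what singles out \eqref{fvm_2D_eq.05} over the alternatives \eqref{fvm_2D_eq.05_varianta2}--\eqref{fvm_2D_eq.05_varianta3}, and it is the step I expect to need the most careful bookkeeping.
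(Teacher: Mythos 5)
Your plan is essentially the paper's own proof: substitute the right-hand sides of \eqref{fvm_2D_eq.07} with $\mathfrak{M}_i=0$ into the chain-rule identity \eqref{fvm_2D_eq.0700}, cancel the kinetic pieces via $\left<\boldsymbol{v}_i,\boldsymbol{v}_{(i,j)}\right>=\tfrac12|\boldsymbol{v}|_i^2+\tfrac12\left<\boldsymbol{v}_i,\boldsymbol{v}_j\right>$, and merge $w_i\,\theta h_{(i,j)}(v_n)_{(i,j)}$ with the pressure-work term using $(\theta h)^s_{(i,j)}=\theta h_{(i,j)}$ --- the paper performs exactly this regrouping through its two displayed splittings $w_i=\tfrac{w_i+w_j}{2}+\tfrac{w_i-w_j}{2}$ and $\boldsymbol{v}_i=\tfrac{\boldsymbol{v}_i+\boldsymbol{v}_j}{2}+\tfrac{\boldsymbol{v}_i-\boldsymbol{v}_j}{2}$. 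The degenerate interfaces $(v_n)_{(i,j)}=0$ that you single out as the main obstacle are in fact immediate rather than delicate (the paper passes over them in silence): all advective fluxes vanish there and $\left<\boldsymbol{v}_j,\boldsymbol{n}_{(i,j)}\right>=-\left<\boldsymbol{v}_i,\boldsymbol{n}_{(i,j)}\right>$, so the lone surviving term $\tfrac12(w_j-w_i)(\theta h)^s_{(i,j)}\left<\boldsymbol{v}_i,\boldsymbol{n}_{(i,j)}\right>$ is already of the required flux form with $\theta h_{(i,j)}$ read as $(\theta h)^s_{(i,j)}$, and no case analysis on ${\rm sign}(w_i-w_j)$ is needed. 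One caution where you assert rather than compute: carrying out the collapse of the free-surface pair actually yields $\tfrac12\theta h_{(i,j)}\left<w_i\boldsymbol{v}_j+w_j\boldsymbol{v}_i,\boldsymbol{n}_{(i,j)}\right>$, i.e.\ the \emph{cross} combination, not the $w_i\boldsymbol{v}_i+w_j\boldsymbol{v}_j$ printed in ${\cal H}_{(i,j)}$ (a slip that also affects the paper's third displayed identity), so the dissipative energy balance holds but with the $w$-part of the flux in crossed form.
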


\begin{remark}
  If $(\theta h,v,w)_j=(\theta h,v,w)_i$ for any
  $j\in{\cal N}(i)$, then
  \begin{equation*}
    {\cal H}=\theta h \boldsymbol{v}\left(\frac{1}{2} |\boldsymbol{v}|^2+w\right)
  \end{equation*}
  is the continuous energy flux in {\rm
    (\ref{swe_veg_numerics.08})}.
\end{remark}

\begin{proof}
Using the equality
(\ref{fvm_2D_eq.0700}), we can write
\begin{equation*}
  \begin{array}{rcl}
    \sigma_i\displaystyle\frac{\rm d}{{\rm d}t}{\cal E}_i&=&-(w_i-\displaystyle\frac{1}{2}|\boldsymbol{v}|_i^2)\sum\limits_{j\in{\cal N}(i)}l_{(i,j)}\theta h_{(i,j)} (v_n)_{(i,j)}-\\
    &&-\left<
    \boldsymbol{v}_i,\sum\limits_{j\in{\cal N}(i)}l_{(i,j)}\theta h_{(i,j)} \boldsymbol{v}_{(i,j)} (v_n)_{(i,j)}
      \right>-\\
    &&-\displaystyle\frac{1}{2}\left<\boldsymbol{v}_i,\sum\limits_{j\in{\cal N}(i)}l_{(i,j)}(w_j-w_i)(\theta h)^s_{(i,j)} \boldsymbol{n}_{(i,j)}\right>-\\
    &&-\sigma_i{\cal K}_i|\boldsymbol{v}|_i^3.
  \end{array}
\end{equation*}
Now, one has the identities
\begin{equation*}
  \begin{array}{rcl}
    w_i\displaystyle\sum\limits_{j\in{\cal N}(i)}l_{(i,j)}\theta h_{(i,j)} (v_n)_{(i,j)}
    &=&\displaystyle\sum\limits_{j\in{\cal N}(i)}l_{(i,j)}\theta h_{(i,j)} (v_n)_{(i,j)}\displaystyle\frac{w_i+w_j}{2}+\\
    &&+\displaystyle\sum\limits_{j\in{\cal N}(i)}l_{(i,j)}\theta h_{(i,j)} (v_n)_{(i,j)}\displaystyle\frac{w_i-w_j}{2}
  \end{array}
\end{equation*}
and
\begin{equation*}
  \begin{array}{l}
    \left<\boldsymbol{v}_i,\displaystyle\sum\limits_{j\in{\cal N}(i)}l_{(i,j)}(w_j-w_i)(\theta h)^s_{(i,j)} \boldsymbol{n}_{(i,j)}\right>=\\
    =\displaystyle\sum\limits_{j\in{\cal N}(i)}l_{(i,j)}(w_j-w_i)(\theta h)^s_{(i,j)}\left<\displaystyle\frac{\boldsymbol{v}_i+\boldsymbol{v}_j}{2}+\displaystyle\frac{\boldsymbol{v}_i-\boldsymbol{v}_j}{2}, \boldsymbol{n}_{(i,j)}\right>.
  \end{array}
\end{equation*}
Therefore,
\begin{equation*}
  \begin{array}{r}
    w_i\displaystyle\sum\limits_{j\in{\cal N}(i)}l_{(i,j)}\theta h_{(i,j)} (v_n)_{(i,j)}+
    \displaystyle\frac{1}{2}\left<\boldsymbol{v}_i,\displaystyle\sum\limits_{j\in{\cal N}(i)}l_{(i,j)}(w_j-w_i)(\theta h)^s_{(i,j)} \boldsymbol{n}_{(i,j)}\right>=\\
    =\displaystyle\sum\limits_{j\in{\cal N}(i)}l_{(i,j)}\theta h_{(i,j)}\left<w_i\boldsymbol{v}_i+w_j\boldsymbol{v}_j,\boldsymbol{n}_{(i,j)}\right>.
  \end{array}
\end{equation*}
Similarly, one obtains the identity
\begin{equation*}
  \begin{array}{r}
    -\displaystyle\frac{1}{2}|\boldsymbol{v}|_i^2\sum\limits_{j\in{\cal N}(i)}l_{(i,j)}\theta h_{(i,j)} (v_n)_{(i,j)}+
    \left<\boldsymbol{v}_i,\sum\limits_{j\in{\cal N}(i)}l_{(i,j)}\theta h_{(i,j)} \boldsymbol{v}_{(i,j)} (v_n)_{(i,j)}\right>=\\
    =\displaystyle\frac{1}{2}\sum\limits_{j\in{\cal N}(i)}l_{(i,j)}\theta h_{(i,j)}\left<\boldsymbol{v}_i,\boldsymbol{v}_j\right>\left<\displaystyle\frac{\boldsymbol{v}_i+\boldsymbol{v}_j}{2}, \boldsymbol{n}_{(i,j)}\right>.
  \end{array}
\end{equation*}
\end{proof}
Taking out the mass exchange through the boundary, the
definitions of the interface values ensure the monotonicity
of the energy with respect to time.

\subsubsection{h-positivity and Critical Points}
\begin{proposition}[h-positivity]
  If $\mathfrak{M} \geq 0 $ when $h=0$, then the ODE system
  {\rm (\ref{fvm_2D_eq.07})} with {\rm
    (\ref{fvm_2D_eq.04})}, {\rm (\ref{fvm_2D_eq.05})}, {\rm
    (\ref{fvm_2D_eq.06})} and {\rm (\ref{fvm_2D_eq.070})}
  preserves the positivity of the water depth function $h$.
\end{proposition}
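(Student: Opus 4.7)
The plan is to read the first equation of (\ref{fvm_2D_eq.07}) as an ODE for $h_i$ alone (since the porosity $\theta_i$ is time-independent) and to apply a Nagumo-type flow-invariance argument to the closed positive orthant $\{(h_1,\ldots,h_N)\in\mathbb{R}^N : h_i\ge 0,\ i=\overline{1,N}\}$. Because the right-hand side of (\ref{fvm_2D_eq.07}) is only piecewise continuous, as the authors have already warned in view of the branching in (\ref{fvm_2D_eq.05})--(\ref{fvm_2D_eq.06}), I would work in the Filippov sense; invariance of a closed set is then equivalent to the tangential condition $dh_i/dt\big|_{h_i=0}\ge 0$ holding for every selection in the Filippov set-valued right-hand side.

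The main step is to verify this tangential condition by a case analysis on each interface flux. Fix an index $i$ with $h_i=0$; the mass-balance equation in (\ref{fvm_2D_eq.07}) reads
\[
\sigma_i\theta_i\,\frac{dh_i}{dt}=\sigma_i\mathfrak{M}_i-\sum_{j\in\mathcal{N}(i)} l_{(i,j)}\,\theta h_{(i,j)}\,(v_n)_{(i,j)},
\]
and the upwind rule (\ref{fvm_2D_eq.06}) treats each summand as follows: if $(v_n)_{(i,j)}>0$, then $\theta h_{(i,j)}=\theta_i h_i=0$ and the summand vanishes; if $(v_n)_{(i,j)}<0$, then $\theta h_{(i,j)}=\theta_j h_j\ge 0$ (using that the neighbouring depths are still non-negative up to that instant), so the product with $(v_n)_{(i,j)}<0$ is non-positive; the degenerate case $(v_n)_{(i,j)}=0$ gives a vanishing summand regardless of the branch selected. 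Adding these non-positive contributions and using the hypothesis $\mathfrak{M}_i\ge 0$ at $h_i=0$, one obtains $\theta_i\,dh_i/dt\ge 0$, and dividing by $\theta_i>0$ completes the tangential check.

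The chief obstacle is the discontinuity of the vector field, which rules out a direct appeal to Picard--Lindel\"of uniqueness or to the classical Nagumo invariance theorem. I would resolve this by invoking the Filippov version of the invariance theorem: at a point where $(v_n)_{(i,j)}=0$ both admissible values of $\theta h_{(i,j)}$ produce the same (zero) flux, so every element of the convex Filippov hull of the right-hand side still satisfies the non-positivity bound on the flux sum, and hence the tangential sign condition $dh_i/dt\ge 0$ at $h_i=0$. A minor subtlety worth flagging is that the cell velocity $v_{a,i}$ recovered from the conservative variable $\theta_i h_i v_{a,i}$ is a priori ill-defined as $h_i\to 0$, but the above argument is insensitive to this point because it uses $v_{a,i}$ only through the sign of $(v_n)_{(i,j)}$.
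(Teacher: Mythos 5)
Your proof is correct and follows essentially the same route as the paper: the authors likewise rewrite the mass-balance flux via the upwind rule (\ref{fvm_2D_eq.06}) as an outflow part proportional to $\theta_i h_i$ (which vanishes at $h_i=0$) plus a non-negative inflow part from the neighbours, and conclude $\sigma_i\,\mathrm{d}(\theta_i h_i)/\mathrm{d}t\ge 0$ on the boundary of the positive cone. The only difference is that you make explicit the Nagumo/Filippov invariance machinery needed to turn this tangency condition into a rigorous conclusion for the discontinuous right-hand side, a point the paper leaves implicit.
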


\begin{proof}
One can rewrite the mass balance equations as
\begin{equation*}
  \sigma_i\displaystyle\frac{\rm d}{{\rm d}t}(\theta_i h_i)=
  -\theta_i h_i \sum\limits_{j\in{\cal N}(i)}l_{(i,j)} (v_n)^{+}_{(i,j)}+
  \sum\limits_{j\in{\cal N}(i)}l_{(i,j)}
  \theta_j h_j (v_n)^{-}_{(i,j)} + \sigma_i \mathfrak{M}_i.
\end{equation*}
Observe that if $h_i=0$ for some $i$, then
$\sigma_i\displaystyle\frac{\rm d}{{\rm d}t}
(\theta_i h_i)\geq 0$.
\end{proof}

There are two kinds of stationary points for the ODE model:
the lake and uniform flow on an infinitely extended plan
with constant vegetation density.

\begin{proposition}[Stationary point. Uniform flow.]
  \label{river}
  Consider $\{\omega_i\}_{i=\overline{1,N}}$ to be a regular
  partition of $\Omega$ with $\omega_i$ regular polygons.
  Let $z-z_0=\xi_b x^b$ be a representation of the soil
  plane surface.  Assume that the discretization of the soil
  surface is given by
  \begin{equation}
    \label{fvm_2D_eq.08}
    z_i-z_0=\xi_b \overline{x}^b_i,
  \end{equation}
  where $\overline{x}^b_i$ is the mass center of the
  $\omega_i$ and $\theta_i=\theta$.  Then, given a value
  $h$, there is $\boldsymbol{v}$ so that the state
  $(h_i,\boldsymbol{v}_i)=(h,\boldsymbol{v})$,
  $i=\overline{1,N}$ is a stationary point of the ODE {\rm
    (\ref{fvm_2D_eq.07})}.
\end{proposition}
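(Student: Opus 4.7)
The plan is to substitute the uniform ansatz $(h_i,\boldsymbol{v}_i) = (h,\boldsymbol{v})$, $\theta_i = \theta$ directly into (\ref{fvm_2D_eq.07}) (with $\mathfrak{M}_i=0$, implicit in a stationary uniform flow) and to exhibit $\boldsymbol{v}$ for which all time derivatives vanish. Under this ansatz the interface values entering the scheme collapse: $\theta h_{(i,j)} = (\theta h)^s_{(i,j)} = \theta h$, $(v_a)_{(i,j)} = v_a$, and $(v_n)_{(i,j)} = \boldsymbol{v}\cdot\boldsymbol{n}_{(i,j)}$ (valid for any of the three alternatives (\ref{fvm_2D_eq.05})--(\ref{fvm_2D_eq.05_varianta3}) for $(\theta h)^s$), so every flux sum reduces to a purely geometric sum over the sides of the single cell $\omega_i$.

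First I would dispose of the mass equation and the convective part of the momentum equation. Both reduce to a scalar multiple of $\sum_{j\in\mathcal{N}(i)} l_{(i,j)}\boldsymbol{n}_{(i,j)}$, which vanishes because $\int_{\partial\omega_i}\boldsymbol{n}\,{\rm d}s = 0$ for any closed polygon. Only the bed-slope term in the momentum equation remains to be balanced against the friction term. Using (\ref{fvm_2D_eq.08}) together with $h_j=h_i=h$ I obtain $w_j-w_i = g(z_j-z_i) = g\xi_b(\overline{x}^b_j - \overline{x}^b_i)$.

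The key geometric computation, and the main obstacle, is to evaluate $\sum_j l_{(i,j)}(\overline{x}^b_j - \overline{x}^b_i)\,n_a|_{(i,j)}$ for a regular partition by congruent regular $k$-gons (only $k=3,4,6$ tile the plane). Two facts suffice: (i) adjacent cells are mirror images across their common edge, so $\overline{x}_j - \overline{x}_i = 2d\,\boldsymbol{n}_{(i,j)}$, where $d$ is the apothem; and (ii) the outward normals of a regular $k$-gon are equidistributed in angle, whence the standard isotropy identity $\sum_{j\in\mathcal{N}(i)}(n_{(i,j)})_a(n_{(i,j)})_b = (k/2)\,\delta_{ab}$. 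Combined with $\sigma_i = kld/2$, a direct calculation collapses the bed-slope term to $\theta h g\,\xi_a\,\sigma_i$.

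The momentum equation thereby reduces to the algebraic vector identity $\theta h g\,\boldsymbol{\xi} = -\mathcal{K}(h,\theta)\,|\boldsymbol{v}|\,\boldsymbol{v}$. Since $\mathcal{K}(h,\theta)\ge 0$, this forces $\boldsymbol{v}$ to be antiparallel to $\boldsymbol{\xi}$, and taking magnitudes yields $|\boldsymbol{v}|^2 = \theta h g|\boldsymbol{\xi}|/\mathcal{K}(h,\theta)$. Hence, whenever $\mathcal{K}(h,\theta)>0$, the explicit downhill velocity
\[
  \boldsymbol{v} = -\sqrt{\tfrac{\theta h g|\boldsymbol{\xi}|}{\mathcal{K}(h,\theta)}}\,\frac{\boldsymbol{\xi}}{|\boldsymbol{\xi}|}
\]
delivers the desired stationary state (with $\boldsymbol{v}=0$ in the trivial horizontal case $\boldsymbol{\xi}=0$). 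The infinite-plane hypothesis is used only so that these geometric identities hold uniformly at every cell, so no separate analysis for boundary cells is needed.
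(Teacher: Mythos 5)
Your proof is correct and follows the same overall strategy as the paper's: substitute the uniform ansatz (under which all interface values collapse to the cell values), note that the mass flux and the convective momentum flux vanish because $\sum_{j}l_{(i,j)}\boldsymbol{n}_{(i,j)}=0$ for a closed polygon, reduce the momentum balance to the bed-slope term against friction, show that the discrete bed-slope sum equals $\sigma\theta h g\,\xi_a$, and solve the resulting algebraic relation for $\boldsymbol{v}$. The one place where you genuinely diverge is the evaluation of $\sum_{j}l_{(i,j)}\xi_b(\overline{x}^b_j-\overline{x}^b_i)\,n_a|_{(i,j)}$. The paper uses the midpoint identity $\overline{x}^b_j-\overline{x}^b_i=2(y^b_{(i,j)}-\overline{x}^b_i)$, recognizes the resulting sum as the midpoint quadrature (exact for linear integrands on straight edges) of $\int_{\partial\omega_i}\xi_b x^b n_a\,{\rm d}s$, and applies the divergence theorem to obtain $\sigma\xi_a$. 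You instead use the mirror relation $\overline{x}_j-\overline{x}_i=2d\,\boldsymbol{n}_{(i,j)}$ together with the tight-frame identity $\sum_{j}(n_{(i,j)})_a(n_{(i,j)})_b=(k/2)\delta_{ab}$ and $\sigma_i=kld/2$. Both are correct and both ultimately encode the same geometric fact about regular tilings; your route is more elementary and self-contained, while the paper's divergence-theorem route better explains \emph{why} the discrete gradient is exact on affine bed profiles (it is the consistency of the quadrature (\ref{fvm_2D_eq.01-02})) and generalizes more readily beyond the three regular tilings. Two small points in your favour: you make explicit that $\boldsymbol{v}$ must be antiparallel to $\boldsymbol{\xi}$ and carry the minus sign into the final formula, whereas the paper's (\ref{fvm_2D_eq.09}) drops it; and you flag the degenerate cases $\boldsymbol{\xi}=0$ and $\mathcal{K}=0$, which the paper passes over silently.
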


\begin{proof}
  For any constant state $h_i=h$ and $(v_a)_i=v_a$, the ODE
  (\ref{fvm_2D_eq.07}) reduces to
  \begin{equation*}
    \displaystyle\frac{1}{2}\theta h g\sum\limits_{j\in{\cal
        N}(i)}l_{(i,j)}(z_j-z_i) n_a|_{(i,j)} =-\sigma{\cal
      K}|\boldsymbol{v}|v_{a}.
  \end{equation*}
  Introducing the representation (\ref{fvm_2D_eq.08}), one
  writes
  \begin{equation*}
    \displaystyle\frac{1}{2}\theta h g\sum\limits_{j\in{\cal N}(i)}l_{(i,j)}\xi_b(\overline{x}^b_j-\overline{x}^b_i) n_a|_{(i,j)}
    =-\sigma{\cal K}|\boldsymbol{v}|v_{a}.
  \end{equation*}
  Note that for a regular partition, one has the identity
  \begin{equation*}
    \overline{x}^b_j-\overline{x}^b_i=2(y_{(i,j)}-\overline{x}^b_i),
  \end{equation*}
  where $y_{(i,j)}$ is the midpoint of the common side
  $\overline{\omega}_i\cap\overline{\omega}_j$.  Taking into
  account
  \begin{equation*}
    \begin{split}
      \displaystyle\frac{1}{2}\theta h g
      \sum\limits_{j\in{\cal N}(i)}l_{(i,j)}(z_j-z_i)
      n_a|_{(i,j)} &=\displaystyle\theta h g
                     \sum\limits_{j\in{\cal N}(i)}l_{(i,j)}\xi_b y^b_{(i,j)} n_a|_{(i,j)}\\
                   &=\displaystyle\theta h g
                     \int\limits_{\partial \omega_i}\xi_b x^b(s) n_a(s){\rm d}s\\
                   &=\displaystyle\theta h g
                     \int\limits_{\omega_i}\xi_b \partial_a x^b{\rm d}x\\
                   &=\sigma \theta h g \xi_a,
    \end{split}
  \end{equation*}
  we obtain that the velocity is a constant field
  \begin{equation}
    \label{fvm_2D_eq.09}
    v_a=\xi_a\left(\frac{\theta h g}{{\cal K} |\xi|}\right)^{1/2}.
  \end{equation}
\end{proof}

A lake is a stationary point characterized by a constant
value of the free surface and a null velocity field over
connected regions.  A lake for which $h_i>0$ for any
$i\in\{1,2,\ldots, N\}$ will be named {\it regular
  stationary point} and a lake that occupies only a part of
a domain flow will be named {\it singular stationary point}.

\begin{proposition}[Stationary point. Lake.]
  \label{lake}
  Denote by $\bar{w}$ some real number value.  In the
  absence of mass source, the following properties hold:

  {\rm (a)} Regular stationary point: the state
  \begin{equation*}
    w_i = \bar{w} \;\; \& \;\; \boldsymbol{v}_i=0, \; \forall i=\overline{1,N}
  \end{equation*}
  is a stationary point of ODE {\rm (\ref{fvm_2D_eq.07})}.

  {\rm (b)} Singular stationary point: if
  $(\theta h)^s_{(i,j)}$ is calculated through {\rm
    (\ref{fvm_2D_eq.05})} or {\rm
    (\ref{fvm_2D_eq.05_varianta2})}, then the state
  \begin{equation*}
    \boldsymbol{v}_i=0, \; \forall i=\overline{1,N} \;\; \& \;\;
    w_i = \bar{w}, \; \forall i\in {\cal I} \;\; \& \;\; h_i=0, \;
    z_i>w, \; \forall i\in \complement{\cal I},
  \end{equation*}
  for some ${\cal I}\subset \{1,2,\ldots,N\}$ is a
  stationary point. ($\complement{\cal I}$ is the complement
  of ${\cal I}$.)
\end{proposition}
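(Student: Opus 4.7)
My plan is to check directly that at the claimed state, the right-hand side of the ODE system (\ref{fvm_2D_eq.07}) vanishes, so that $\mathrm{d}(\theta_i h_i)/\mathrm{d}t = 0$ and $\mathrm{d}(\theta_i h_i v_{a\,i})/\mathrm{d}t = 0$ for every $i$. The hypothesis $\boldsymbol{v}_i = 0$ for all $i$ is the unifying ingredient: from (\ref{fvm_2D_eq.04}) it forces $(v_a)_{(i,j)} = 0$ and $(v_n)_{(i,j)} = 0$ on every interface, which in turn kills both convective fluxes and makes the dissipation $\mathcal{K}_i|\boldsymbol{v}|_i v_{a\,i}$ vanish. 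With no mass source present, the mass equation reduces to $0 = 0$ automatically, so in both (a) and (b) the problem collapses to checking that the pressure-like term $\tfrac{1}{2}\sum_j l_{(i,j)}(w_j-w_i)(\theta h)^s_{(i,j)} n_a|_{(i,j)}$ is zero.

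For part (a), the argument is immediate: $w_i = \bar{w}$ for every $i$ gives $w_j - w_i = 0$ on every interface, so the pressure term vanishes regardless of which rule is used to define $(\theta h)^s_{(i,j)}$. This already covers the regular stationary point case.

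For part (b), I would handle the pressure term by splitting the neighbor set $\mathcal{N}(i)$ of each cell into three interface types and inspecting each:
\begin{itemize}
\item \emph{Wet--wet} interfaces, with $i,j\in\mathcal{I}$: then $w_i = w_j = \bar{w}$, so $w_j - w_i = 0$ and the interface contribution vanishes.
\item \emph{Dry--dry} interfaces, with $i,j\in\complement\mathcal{I}$: then $h_i = h_j = 0$, and both formulas (\ref{fvm_2D_eq.05}) and (\ref{fvm_2D_eq.05_varianta2}) pick $(\theta h)^s_{(i,j)} \in \{\theta_i h_i, \theta_j h_j\} = \{0\}$; the contribution vanishes.
\item \emph{Shore} interfaces, with (say) $i\in\mathcal{I}$ and $j\in\complement\mathcal{I}$: then $w_i = \bar{w}$ while $h_j = 0$ and $z_j > \bar{w}/g$ force $w_j = g z_j > \bar{w} = w_i$. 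Since also $(v_n)_{(i,j)} = 0$, rule (\ref{fvm_2D_eq.05}) selects the branch $w_i \le w_j$ and yields $(\theta h)^s_{(i,j)} = \theta_j h_j = 0$; rule (\ref{fvm_2D_eq.05_varianta2}) does the same. So the contribution vanishes, and the symmetric case $j\in\mathcal{I}$, $i\in\complement\mathcal{I}$ is analogous with $(\theta h)^s_{(i,j)} = \theta_i h_i = 0$.
\end{itemize}
Summing, the whole pressure term is zero at every $i$, so each momentum equation reads $0 = 0$ and the state is indeed stationary.

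The only non-trivial point, and the reason (\ref{fvm_2D_eq.05_varianta3}) is excluded from part (b), is precisely the shore-interface case: with the arithmetic mean $(\theta h)^s_{(i,j)} = (\theta_i h_i + \theta_j h_j)/2$, the wet side would contribute a nonzero $(\theta h)^s$ multiplied by the nonzero jump $w_j - w_i$, breaking stationarity. So the main ``obstacle'' is really a verification that the upwind-in-$w$ rules (\ref{fvm_2D_eq.05}) and (\ref{fvm_2D_eq.05_varianta2}) precisely assign the dry value at every shore interface, which is why the construction of these rules is tailored to the well-balanced property for partially wet lakes.
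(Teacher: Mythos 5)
Your proposal is correct and follows essentially the same route as the paper: zero velocities annihilate the convective and friction terms, and the remaining pressure term is shown to vanish by a case analysis on interface types, with the upwind-in-$w$ selection assigning the dry (zero) value of $(\theta h)^s$ at shore interfaces. Your write-up is merely more explicit than the paper's (which compresses the wet--wet/shore/dry--dry cases into two lines), and your closing remark on why the arithmetic-mean rule (\ref{fvm_2D_eq.05_varianta3}) must be excluded is a correct observation the paper leaves implicit.
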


\begin{proof}
  For the sake of simplicity, in the case of the singular
  stationary point, we consider that
  $\displaystyle\Omega_{\cal I}=\cup_{i\in{\cal I}}\omega_i$
  is a connected domain.  Since the velocity field is zero,
  it only remains to verify that
  \begin{equation*}
    \sum\limits_{j\in{\cal N}(i)}l_{(i,j)}(w_j-w_i)(\theta
    h)^s_{(i,j)} n_a|_{(i,j)}=0,
  \end{equation*}
  for any cell $\omega_i$.  If $i \in \complement{\cal I}$,
  then the above sum equals zero since $h^s_{(i,j)}=0$, for
  all $j\in{\cal N}(i)$.  If $i \in {\cal I}$, then the sum
  is again zero because either $h^s_{(i,j)}=0$, for
  $j\in\complement{\cal I}$ or $w_j=w_i$, for
  $j\in{\cal I}$.
\end{proof}

\subsection{Time Integration Scheme}
\label{sect_FractionalSteptimeSchemes}
In what follows, we introduce a (first order) fractional
time integration scheme in order to integrate the ODE
(\ref{fvm_2D_eq.07}).

We introduce some notations
\begin{equation}
  \label{fvm_2D_eq_frac.01}
  \begin{split}
    {\cal J}_{a\,i}(h,\boldsymbol{v}):=&-\displaystyle\sum\limits_{j\in{\cal N}(i)}l_{(i,j)}\theta h_{(i,j)} (v_a)_{(i,j)} (v_n)_{(i,j)},\\
    {\cal S}_{a\,i}(h,w):=&-\displaystyle\frac{1}{2}\sum\limits_{j\in{\cal N}(i)}l_{(i,j)}(w_j-w_i)(\theta h)^{s}_{(i,j)} n_a|_{(i,j)},\\
    {\cal L}_i((h,\boldsymbol{v})):=&-\displaystyle\sum\limits_{j\in{\cal N}(i)}l_{(i,j)}\theta h_{(i,j)} (v_n)_{(i,j)}.
  \end{split}
\end{equation}

Now, (\ref{fvm_2D_eq.07}) becomes
\begin{equation}
  \label{fvm_2D_eq_frac.02}
  \begin{split}
    \sigma_i\displaystyle\frac{\rm d}{{\rm d}t}
(\theta_i h_i)&={\cal L}_i(h,\boldsymbol{v})+\sigma_i\mathfrak{M}(t,h),\\
    \sigma_i\displaystyle\frac{\rm d}{{\rm d}t}
(\theta_i h_i v_{a\,i})&={\cal J}_{a\,i}(h,\boldsymbol{v})+{\cal S}_{a\,i}(h,w)-{\cal K}(h)|\boldsymbol{v}_i| v_{a\,i}.
  \end{split}
\end{equation}

\noindent {\bf Mass source.} We assume that the mass source
$\mathfrak{M}$ is of the form
\begin{equation}
  \label{fvm_2D_eq.14}
  \mathfrak{M}(x,t,h)=r(t)-\theta(x)\iota(t,h),
\end{equation}
where $r(t)$ quantifies the rate of the rain and
$\iota(t,h)$ quantifies the infiltration rate.  The
infiltration rate is a continuous function and satisfies the
following condition
\begin{equation}
  \label{fvm_2D_eq.15}
  \iota(t,h)<\iota_{m},\quad {\rm if}\; h\geq 0.
\end{equation}
The basic idea of a fractional time method is to split the
initial ODE into two sub-models, integrate them separately,
and then combine the two solutions \cite{veque-phd, strang}.

We split the ODE (\ref{fvm_2D_eq.07}) into
\begin{equation}
  \label{fvm_2D_eq_frac.03}
  \begin{split}
    \sigma_i\displaystyle\frac{\rm d}{{\rm d}t}
(\theta_i h_i)&={\cal L}_i(h,\boldsymbol{v}),\\
    \sigma_i\displaystyle\frac{\rm d}{{\rm d}t}
(\theta_i h_i v_{a\,i})&={\cal J}_{a\,i}(h,\boldsymbol{v}) +{\cal S}_{a\,i}(h,w),
  \end{split}
\end{equation}
and
\begin{equation}
  \label{fvm_2D_eq_frac.04}
  \begin{split}
    \sigma_i\displaystyle\frac{\rm d}{{\rm d}t}
(\theta_i h_i)&=\sigma_i\mathfrak{M}_i(t,h),\\
    \sigma_i\displaystyle\frac{\rm d}{{\rm d}t}
(\theta_i h_i v_{a\,i})&=-{\cal K}(h)|\boldsymbol{v}_i| v_{a\,i}.
  \end{split}
\end{equation}
A first order fractional step time accuracy reads as
\begin{equation}
  \label{fvm_2D_eq_frac.05}
  \begin{split}
    \sigma(\theta h)^{*}&=\sigma(\theta h)^n+\triangle t_n{\cal L}((h,\boldsymbol{v})^{n}),\\
    \sigma(\theta hv_a)^{*}&=\sigma(\theta hv_a)^{n}+\triangle t_n \left({\cal J}_a((h,\boldsymbol{v})^{n})+{\cal S}_a((h,w)^{n})\right),\\
  \end{split}
\end{equation}
\begin{equation}
  \label{fvm_2D_eq_frac.06}
  \begin{split}
    \sigma(\theta h)^{n+1}&=\sigma(\theta h)^{*}+\sigma\triangle t_n\mathfrak{M}(t^{n+1},h^{n+1}),\\
    \sigma(\theta hv_a)^{n+1}&=\sigma(\theta hv_a)^{*}-\triangle t_n{\cal K}(h)|\boldsymbol{v}^{n+1}| v^{n+1}_{a}.\\
  \end{split}
\end{equation}
Steps (\ref{fvm_2D_eq_frac.05}) and
(\ref{fvm_2D_eq_frac.06}) lead to
\begin{equation}
  \label{fvm_2D_eq_frac.07}
  \begin{array}{rcl}
    \sigma(\theta h)^{n+1}&=&\sigma(\theta h)^n+\triangle t_n{\cal L}((h,\boldsymbol{v})^{n})+\sigma\triangle t_n\mathfrak{M}(t^{n+1},h^{n+1}),\\
    \sigma(\theta hv_a)^{n+1}&=&\sigma(\theta hv_a)^{n}+\triangle t_n\left({\cal J}_a((h,\boldsymbol{v})^{n})+{\cal S}_a((h,w)^{n})\right)-\\
    &&-\triangle t_n\sigma{\cal K}(h)|\boldsymbol{v}^{n+1}| v^{n+1}_{a}.
  \end{array}
\end{equation} 
To advance a time step, one needs to solve a scalar
nonlinear equation for $h$ and a 2D nonlinear system of
equations for velocity $\boldsymbol{v}$.

In what follows, we investigate some important physical
properties of the numerical solution given by
(\ref{fvm_2D_eq_frac.07}): $h$-positivity, well-balanced
property and monotonicity of the energy.

\subsubsection{h-positivity. Stationary Points}
\begin{proposition}[$h$-positivity]
  There is an upper bound $\tau_n$ for the time step
  $\triangle t_n$ such that if $\triangle t_n<\tau_n$ and
  $h^n>0$, then $h^{n+1}\geq 0$.
\end{proposition}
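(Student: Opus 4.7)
The approach is to rewrite the mass equation of (\ref{fvm_2D_eq_frac.07}) as an explicit convective contribution plus an implicit source, extract a CFL--type bound that makes the explicit part a non-negative combination of neighbouring $h^n$ values, and then argue that the implicit source step preserves non-negativity. The strict hypothesis $h^n>0$ is what buys the room to set up the CFL condition.

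First, decomposing the normal interface velocities, computed from $\boldsymbol{v}^n$, as $(v_n)_{(i,j)}=(v_n)^+_{(i,j)}-(v_n)^-_{(i,j)}$ with $(v_n)^\pm\ge 0$, and using the upwind rule (\ref{fvm_2D_eq.06}), the first equation in (\ref{fvm_2D_eq_frac.07}) can be recast as
\begin{equation*}
\sigma_i\theta_i h_i^{n+1}=\theta_i h_i^n\!\left(\sigma_i-\triangle t_n\!\sum_{j\in{\cal N}(i)}l_{(i,j)}(v_n)^+_{(i,j)}\right)+\triangle t_n\!\sum_{j\in{\cal N}(i)}l_{(i,j)}\theta_j h_j^n(v_n)^-_{(i,j)}+\sigma_i\triangle t_n\mathfrak{M}_i^{n+1}.
\end{equation*}
Setting
\begin{equation*}
\tau_n:=\min_i\frac{\sigma_i}{\sum_{j\in{\cal N}(i)}l_{(i,j)}(v_n)^+_{(i,j)}}
\end{equation*}
(with the ratio read as $+\infty$ when the denominator vanishes), any $\triangle t_n<\tau_n$ forces the coefficient of $\theta_i h_i^n$ to be strictly positive; the second sum is automatically non-negative since $(v_n)^-\ge 0$ and $\theta_j h_j^n>0$ by hypothesis. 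Thus the convective part alone yields a non-negative contribution to $h_i^{n+1}$.

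It remains to handle the implicit source $\mathfrak{M}_i^{n+1}$. Under the physical assumption $\mathfrak{M}(\cdot,h=0)\ge 0$ already invoked in the semi-discrete Proposition, which corresponds to $r\ge 0$ and $\iota(\cdot,0)=0$ in (\ref{fvm_2D_eq.14}), the scalar equation for $h_i^{n+1}$ takes the form $h+\triangle t_n\iota(t^{n+1},h)=B_i/\theta_i$ with $B_i\ge 0$; continuity of $\iota$ together with the uniform bound (\ref{fvm_2D_eq.15}) and an intermediate-value argument then produce a non-negative root, possibly after further restricting $\triangle t_n$ by $1/\iota_m$. The main obstacle is precisely this implicit step: since $\iota(t,\cdot)$ is only assumed continuous and not monotone, the left-hand side may fail to be strictly increasing, so a mild sharpening of $\tau_n$ (of the order of $1/\iota_m$) is needed to select a non-negative root. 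Once this is secured, the convective CFL bound above carries the rest of the argument, yielding the desired $\tau_n$.
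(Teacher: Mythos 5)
Your proposal is correct and follows essentially the same route as the paper: the same upwind splitting of the mass equation into a non-negative explicit convective part under a CFL-type restriction, the paper merely stating the slightly more conservative bound $\tau_n=\frac{1}{v^n_{\rm max}}\min_i\left\{\sigma_i/\sum_{j\in{\cal N}(i)}l_{(i,j)}\right\}$ of (\ref{fvm_2D_eq_frac.07-1}) instead of your sharp local one. The only difference is that you spell out the intermediate-value argument for the implicit infiltration equation $h+\triangle t_n\iota(t^{n+1},h)=B_i/\theta_i$, a step the paper's proof leaves implicit after moving $\iota$ to the left-hand side.
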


\noindent
\begin{proof} For any cell $i$ one has
\begin{equation*}
  \begin{split}
    \sigma_i\theta_ih^{n+1}_i+\triangle t_n\iota(t^{n+1},h^{n+1}_i)=&\sigma_i\theta_ih^{n}_i\left(1-\displaystyle\frac{\triangle t_n}{\sigma_i}\sum\limits_{j\in{\cal N}(i)}l_{(i,j)} (v_n)^{n,+}_{(i,j)}
      \right)+\\
    &\displaystyle +\triangle t_n \sum\limits_{j\in{\cal N}(i)}l_{(i,j)}(\theta h^n)_{j} (v_n)^{n,-}_{(i,j)}+\triangle t_nr(t^{n+1}).
  \end{split}
\end{equation*}
A choice for the upper bound $\tau_n$ is given by
\begin{equation}
  \label{fvm_2D_eq_frac.07-1}
  \tau_n=\displaystyle\frac{1}{v^n_{\rm max}}\min_i\left\{\displaystyle\frac{\sigma_i}{\sum\limits_{j\in{\cal N}(i)}l_{(i,j)}}\right\}.
\end{equation}
\end{proof}

\begin{proposition}[Well-balanced]
  The lake and the uniform flow are stationary points of
  scheme {\rm (\ref{fvm_2D_eq_frac.07})}.
\end{proposition}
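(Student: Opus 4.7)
The plan is to verify directly that each of the two stationary states identified in Propositions~\ref{river} and~\ref{lake}---the lake and the uniform flow over a tilted plane---satisfies $h^{n+1}=h^n$ and $\boldsymbol{v}^{n+1}=\boldsymbol{v}^n$ when plugged into the update formula (\ref{fvm_2D_eq_frac.07}) with $\mathfrak{M}\equiv 0$. The key observation is that the spatial operators $\mathcal{L}$, $\mathcal{J}_a$ and $\mathcal{S}_a$ appearing in (\ref{fvm_2D_eq_frac.07}) are exactly those used for the semi-discrete ODE (\ref{fvm_2D_eq_frac.02}), so the evaluations already performed in the proofs of Propositions~\ref{river} and~\ref{lake} can be reused. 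First I would check the mass equation, which is explicit; then the momentum equation, which is implicit in $\boldsymbol{v}^{n+1}$ and therefore the only step needing a solvability argument.

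For the lake state ($\boldsymbol{v}^n_i=0$, with $w_i=\bar{w}$ on $\mathcal{I}$ and $h_i=0$, $z_i>\bar{w}$ on the complement) every normal velocity $(v_n)^n_{(i,j)}$ vanishes, so $\mathcal{L}_i=0$ and $\mathcal{J}_{a,i}=0$. By the argument used in Proposition~\ref{lake}(b), $\mathcal{S}_{a,i}=0$ too: on every interface either $(\theta h)^s_{(i,j)}=0$ (dry neighbour) or $w_j=w_i$ (two wet lake cells). The mass equation therefore gives $h_i^{n+1}=h_i^n$, and the momentum equation reduces to $\sigma_i\bigl(\theta_i h_i+\triangle t_n\,\mathcal{K}(h_i)|\boldsymbol{v}^{n+1}_i|\bigr)v_{a,i}^{n+1}=0$. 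On wet cells ($h_i>0$) the bracket is strictly positive, which forces $\boldsymbol{v}^{n+1}_i=0$; on dry cells the product $\theta_i h_i v_{a,i}^{n+1}$ is already zero and the state is preserved.

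For the uniform flow state with $h_i=h$ constant and $\boldsymbol{v}_i=\boldsymbol{v}$ given by (\ref{fvm_2D_eq.09}), the closed-polygon identity $\sum_{j\in\mathcal{N}(i)}l_{(i,j)}\boldsymbol{n}_{(i,j)}=0$ yields $\sum_j l_{(i,j)}(v_n)_{(i,j)}=0$, hence $\mathcal{L}_i=\mathcal{J}_{a,i}=0$ and $h_i^{n+1}=h$. The computation already carried out in Proposition~\ref{river} gives $\mathcal{S}_{a,i}=\sigma_i\mathcal{K}(h)|\boldsymbol{v}|v_a$, which exactly cancels the friction contribution when one takes $\boldsymbol{v}^{n+1}_i=\boldsymbol{v}$; this choice therefore satisfies the momentum equation identically. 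The main subtle point I anticipate is to justify that $\boldsymbol{v}^{n+1}$ is \emph{uniquely} determined by the implicit per-cell nonlinear system, so that the stationary state is actually the branch returned by the scheme. This follows from the strict monotonicity of the map $\boldsymbol{u}\mapsto \theta h\boldsymbol{u}+\triangle t_n\mathcal{K}(h)|\boldsymbol{u}|\boldsymbol{u}$, whose Jacobian $\theta h\,I+\triangle t_n\mathcal{K}(h)\bigl(|\boldsymbol{u}|I+\boldsymbol{u}\boldsymbol{u}^T/|\boldsymbol{u}|\bigr)$ is positive definite whenever $h>0$, which rules out spurious solutions and closes the argument.
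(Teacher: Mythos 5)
Your proof is correct and follows exactly the route the paper intends: its own proof is the single sentence ``one can prove this result similarly as in Propositions~\ref{river} and~\ref{lake}'', and you carry out precisely that verification, reusing the vanishing of $\mathcal{L}_i$, $\mathcal{J}_{a\,i}$ and the evaluation of $\mathcal{S}_{a\,i}$ from the semi-discrete case. Your additional observation that the implicit velocity update $\boldsymbol{u}\mapsto \theta h\boldsymbol{u}+\triangle t_n\mathcal{K}(h)|\boldsymbol{u}|\boldsymbol{u}$ is strictly monotone, so the stationary state is the unique branch returned by the scheme, is a worthwhile point the paper leaves implicit.
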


\noindent
\begin{proof}
  One can prove this result similarly as in propositions
  \ref{river} and \ref{lake}.
\end{proof}

Unfortunately, the semi-implicit scheme
(\ref{fvm_2D_eq_frac.07}) does not preserve the monotonicity
of the energy.

\subsubsection{Discrete Energy}   
The variation of the energy between two consecutive time
steps can be written as
\begin{equation}
  \label{fvm_2D_eq_frac.08}
  \begin{split}
    {\cal E}^{n+1}-{\cal E}^n=&\displaystyle\sum\limits_i\theta_i\sigma_i(h_i^{n+1}-h_i^n)(w_i^n-\displaystyle\frac{|\boldsymbol{v}^n_i|^2}{2})+\\
    &+\displaystyle\sum\limits_i\theta_i\sigma_i\left<(h\boldsymbol{v})^{n+1}_i-(h\boldsymbol{v})^{n}_i,\boldsymbol{v}^n_i\right>+\\
    &+g\displaystyle\sum\limits_i\theta_i\sigma_i\displaystyle\frac{(h_i^{n+1}-h_i^n)^2}{2}+\sum\limits_i\theta_i\sigma_i \displaystyle\frac{h_i^{n+1}}{2}\left|\boldsymbol{v}^{n+1}_i-\boldsymbol{v}^{n}_i\right|^2.
  \end{split}
\end{equation}
If the sequence $(h,\boldsymbol{v})^n$ is given by the
scheme (\ref{fvm_2D_eq_frac.07}), we obtain
\begin{equation}
  \label{fvm_2D_eq_frac.09}
  \begin{split}
    {\cal E}^{n+1}-{\cal E}^n=&-\displaystyle\triangle t_n\sum\limits_i\sigma_i{\cal K}(h^{n+1})|\boldsymbol{v}^{n+1}_i|^2\left<\boldsymbol{v}_i^{n+1},\boldsymbol{v}^{n}_i\right>+\\
    &+\displaystyle g\sum\limits_i\theta_i\sigma_i\displaystyle\frac{(h_i^{n+1}-h_i^n)^2}{2}+\sum\limits_i\theta_i\sigma_i\displaystyle\frac{h_i^{n+1}}{2}\left|\boldsymbol{v}^{n+1}_i-\boldsymbol{v}^{n}_i\right|^2+\\
    &+TS+TB,
  \end{split}
\end{equation}                          
where $TB$ and $TS$ stand for the contribution of boundary
and mass source to the energy production.

\subsubsection{Stability}
The stability of any numerical scheme ensures that errors in
data at a time step are not further amplified along the next
steps. To acquire the stability of our scheme, we have
investigated several time-bounds $\tau_n$ and we introduced
an artificial viscosity in the scheme to force the monotony
of the discrete energy.
  
The artificial viscosity is a known method for improving the
performance of a numerical scheme, see for example
\cite{veque, kurganov}.  So, we add a ``viscous'' term to
${\cal J}$:
\begin{equation}
  \label{eq_artif_visc}
  {\cal J}^{\boldsymbol{v}}_{a\,i}={\cal J}_{a\,i}(h,\boldsymbol{v})+
  \displaystyle\sum\limits_{j\in{\cal N}(i)}l_{(i,j)} \nu_{a(i,j)}(h,\boldsymbol{v}).
\end{equation}

In our numerical experiments for the case of continuum soil
and porosity functions, we found out that the following
version of $\nu_{a(i,j)}(h,\boldsymbol{v})$ works very well:
\begin{equation}
  \label{eq_visc1}
  \nu_{a(i,j)}(h,\boldsymbol{v}) =c_{(i,j)}\mu_{(i,j)}
  \left( (v_a)_j - (v_a)_i \right),
\end{equation}
where 
\begin{equation}
  c_i=|\boldsymbol{v}|_i+\sqrt{gh_i}, \quad  c_{(i,j)}=\max\{c_i,c_j\}
\end{equation}
and 
\begin{equation}
  \mu_{(i,j)}=\displaystyle
 \frac{2 \theta_i h_i \theta_j h_j}{\theta_i h_i +
    \theta_j h_j}.
\end{equation}
The case with discontinuous data will be discussed later.

The variation of energy is now given by
\begin{equation}
  \label{fvm_2D_eq_frac.11}
  {\cal E}^{n+1}_{\boldsymbol{v}}-{\cal E}_{\boldsymbol{v}}^n=
  {\cal E}^{n+1}-{\cal E}^n-
  \triangle t_n\displaystyle\sum\limits_{s(i,j)}l_{(i,j)}
  \mu_{(i,j)}c_{(i,j)}
  \left|\boldsymbol{v}_i-\boldsymbol{v}_j\right|^2.
\end{equation}

The hyperbolic character of the shallow water equations and
the necessity to preserve the positivity of the water depth
impose an upper bound for the time step.  Both requirements
are satisfied if this upper bound is chosen to be
\begin{equation}
  \label{fvm_2D_eq_frac.12}
  \tau_n=CFL\displaystyle\frac{\phi_{\rm min}}{c^n_{\rm max}},
\end{equation}
where
\begin{equation}
  \label{fvm_2D_eq_frac.13}
  c_{\rm max}=\max\limits_i \{c_i\},\quad
  \phi_{\rm min}=\min\limits_i
  \left\{
    \displaystyle\frac{\sigma_i}{\sum\limits_{j\in{\cal N}(i)}l_{(i,j)}}
  \right\}
\end{equation}
and $CFL$ is a number between $0$ and $1$.
\begin{remark}
  An upper bound of type {\rm (\ref{fvm_2D_eq_frac.12})} for
  the time-step is well known in the theory of hyperbolic
  systems as the Courant–Friedrichs–Lewy condition {\rm
    \cite{veque, bouchut-book}}.
\end{remark}

\section{Model Validation}
\label{sect_ModelValidation}
A rough classification of validation methods splits them
into two classes: internal and external.  For the internal
validation, one analyses the numerical results into a
theoretical frame: comparison to analytic results,  
sensibility to the variation of the parameters, robustness,
stability with respect to the errors in the input data etc.
These methods validate the numerical results with respect to
the mathematical model and not with the physical processes;
this type of validation is absolutely necessary to ensure
the mathematical consistency of the method.

The external validation methods assume a comparison of the
numerical data with measured real data.  The main advantage
of these methods is that a good consistency of data
validates both the numerical data and the mathematical
model.  In the absence of measured data, one can do a
qualitative analysis: the evolution given by the numerical
model is similar to the observed one, without pretending
quantitative estimations.

\subsection{Internal Validation}
\label{sect_InternalValidation}
We analyze the performance of the 1D version of our
numerical scheme by considering the following test problems:
\begin{enumerate}
  \item flow over a bump, 
  \item Thacker problem, 
  \item Riemann problems.
\end{enumerate}

\subsubsection{Flow Over a Bump}
\label{sect_FlowOveraBump}
We consider two classical tests related to the steady
solution and the propagation of a wave over a bump.

\medskip
\underline{Steady solution}
\medskip

The steady solution was analyzed in \cite{fayssal, delestre,
  zhou}.  In the absence of all frictional terms and mass
sources, a steady solution of the shallow water equations is
given by the solution of the algebraic equations
\begin{equation}
  \label{validatation.eq-01}
  \left\{
    \begin{array}{l}
      hu=q_0\\
      \displaystyle\frac{u^2}{2}+gh(z+h)=\xi_0
    \end{array}
  \right. .
\end{equation}
We consider the case of a soil surface that is flat with a
bump
\begin{equation}
  \label{validatation.eq-02}
  z(x) =
  \left\{
    \begin{array}{ll}
      0.2-0.05(x-12.5)^2, & {\rm if}\; |x-12.5|<2\\
      0,&  {\rm otherwise}
    \end{array}
  \right. ,
\end{equation}
defined over the domain $\Omega=[0,L]$ of the flow with
$L=25$m.  The only boundary condition we use is
\begin{equation}
  \label{validatation.eq-03}
  \left. hu \right|_{x=0} = q_0.
\end{equation}
The initial state is given by
\begin{equation}
  \label{validatation.eq-04}
  \left\{
    \begin{array}{l}
      (z+h)|_{t=0}=\eta\\
      u|_{t=0}=0
    \end{array}
  \right. .
\end{equation}
The goal is to find out whether or not the discrete
dynamical system (\ref{fvm_2D_eq_frac.07}) with
(\ref{fvm_2D_eq.05}) and (\ref{eq_visc1}) reaches a steady
state.  If a steady state is reached, then we are interested
to see the way this state is determined by the initial water
level $\eta$.  As in the aforementioned papers, we consider
the cases corresponding to values of the left boundary flux
$q_0=1.53$m$^2$/s and $q_0=0.18$m$^2$/s.

Figures~\ref{fig_validate_shm_shs_case1} and
\ref{fig_validate_shm_shs_case2} illustrate the behavior of
the numerical solution when it develops a moving or a steady
shock.
\begin{figure}[htbp]
  \hspace{-3mm}
  \begin{tabular}{ cccccc }
    {} & \hspace{5mm}\tiny{$t=5$s} & \hspace{5mm}\tiny{$t=20$s} & \hspace{5mm}\tiny{$t=150$s} & \tiny{$t=200$s}& \\
    \begin{turn}{90}\hspace{4mm}\tiny{Water level [m]}\end{turn}\hspace{-5mm}
       & \includegraphics[width=0.24\textwidth]{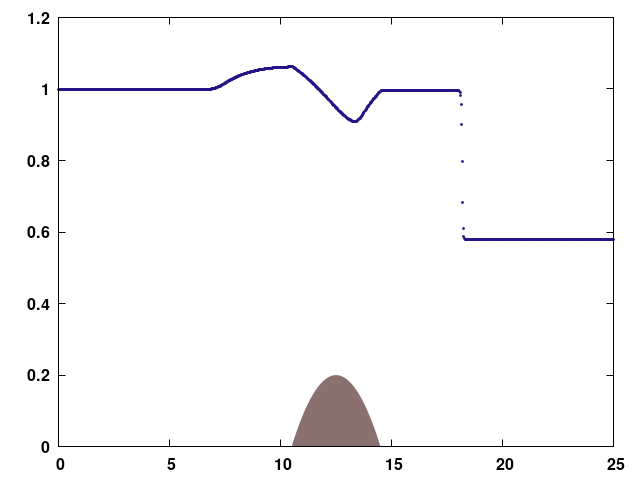}\hspace{-5mm}
       & \includegraphics[width=0.24\textwidth]{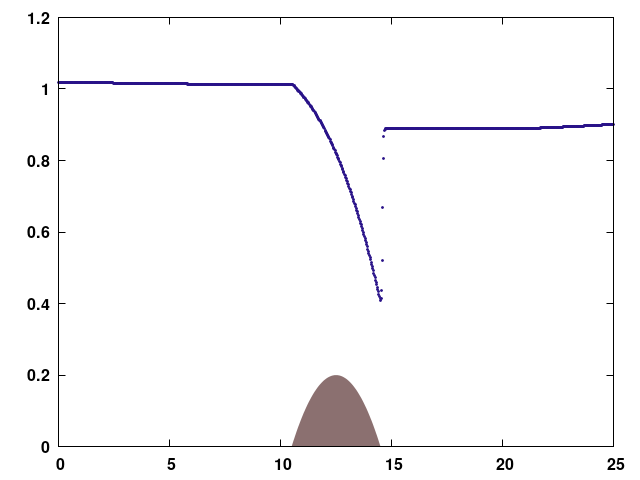}\hspace{-5mm}
       & \includegraphics[width=0.24\textwidth]{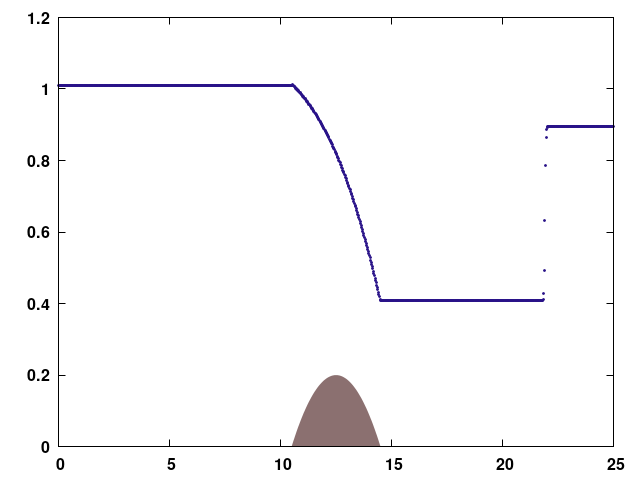}\hspace{-5mm}
       & \includegraphics[width=0.24\textwidth]{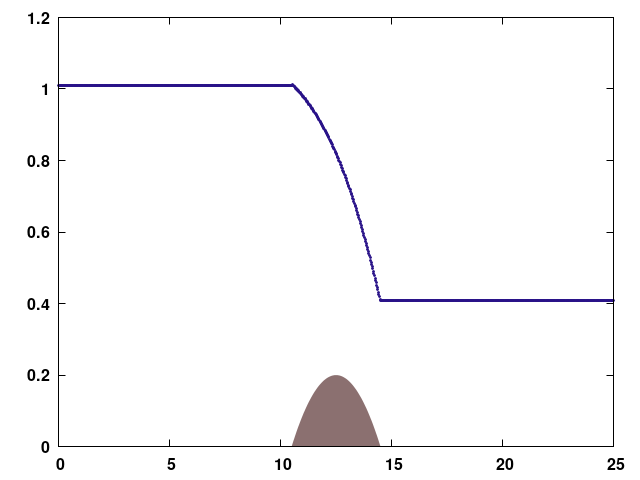}\hspace{-3mm}
       & \begin{turn}{90}\hspace{7mm}\tiny{$\eta=0.58$m}\end{turn}\\
    \begin{turn}{90}\hspace{4mm}\tiny{Water level [m]}\end{turn}\hspace{-5mm}
       & \includegraphics[width=0.24\textwidth]{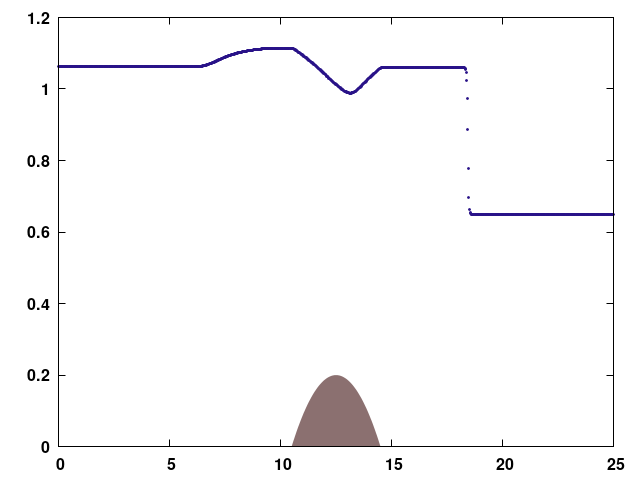}\hspace{-5mm}
       & \includegraphics[width=0.24\textwidth]{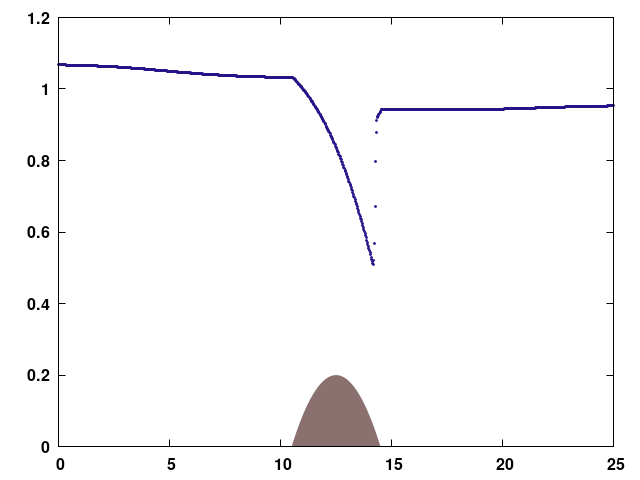}\hspace{-5mm}
       & \includegraphics[width=0.24\textwidth]{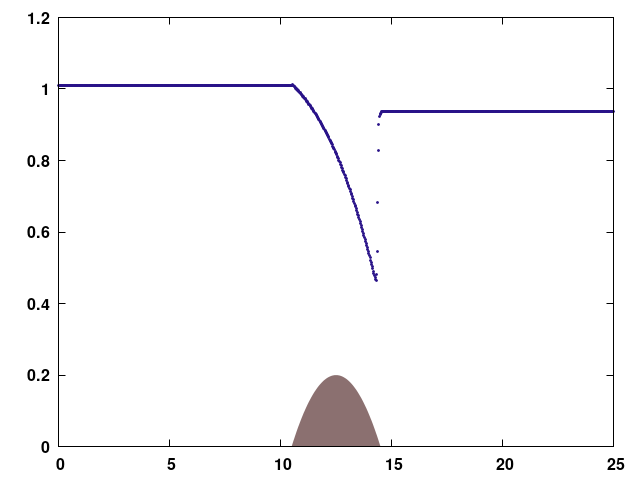}\hspace{-5mm}
       & \includegraphics[width=0.24\textwidth]{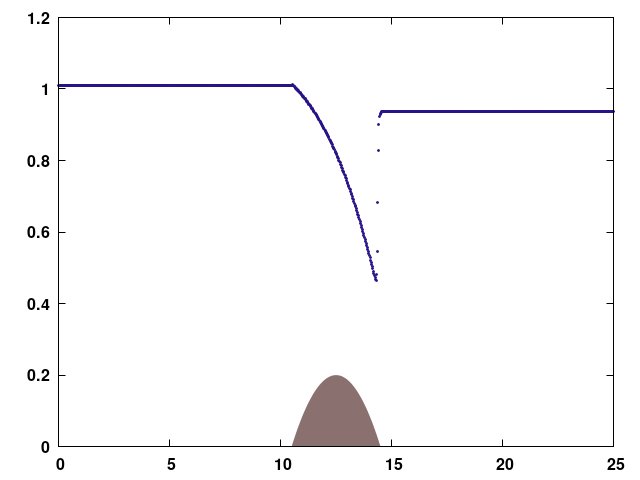}\hspace{-3mm}
       & \begin{turn}{90}\hspace{7mm}\tiny{$\eta=0.65$m}\end{turn}\vspace{-2mm}\\
    {} & \hspace{5mm}\tiny{$x$ [m]} & \hspace{5mm}\tiny{$x$ [m]} & \hspace{5mm}\tiny{$x$ [m]} & \tiny{$x$ [m]}&
  \end{tabular}
  \caption{Snapshots from the steady solution of a flow over
    a bump.  Moving (first row) and steady (second row) shocks
    for $q_0=1.53$m$^2$/s.}
  \label{fig_validate_shm_shs_case1}
\end{figure}

\begin{figure}[htbp]
  \hspace{-3mm}
  \begin{tabular}{ cccccc }
    {} & \hspace{5mm}\tiny{$t=5$s} & \hspace{5mm}\tiny{$t=20$s} & \hspace{5mm}\tiny{$t=150$s} & \tiny{$t=200$s}& \\
    \begin{turn}{90}\hspace{4mm}\tiny{Water level [m]}\end{turn}\hspace{-5mm}
       & \includegraphics[width=0.24\textwidth]{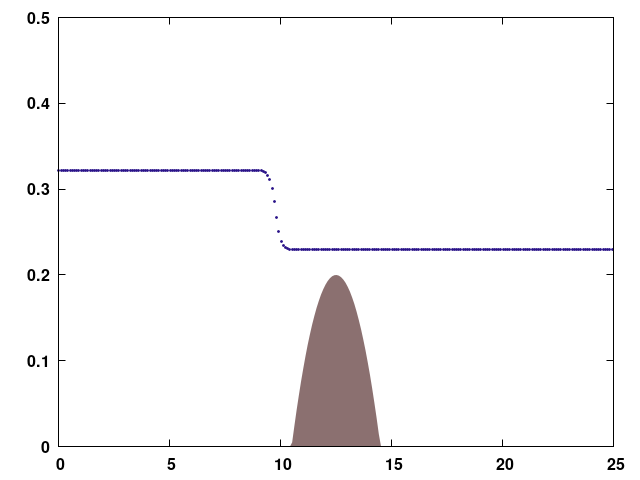}\hspace{-5mm}
       & \includegraphics[width=0.24\textwidth]{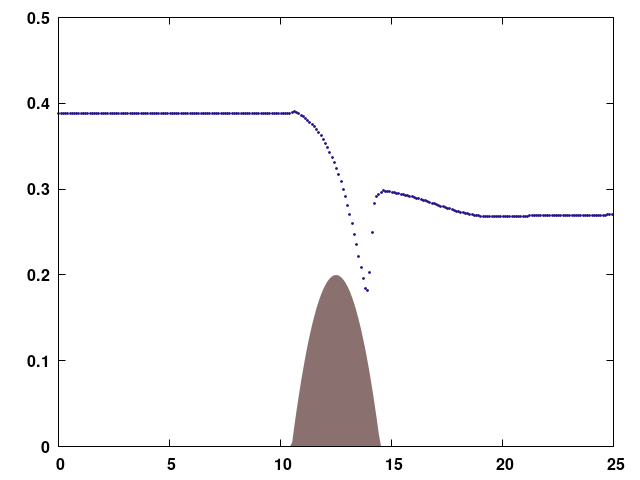}\hspace{-5mm}
       & \includegraphics[width=0.24\textwidth]{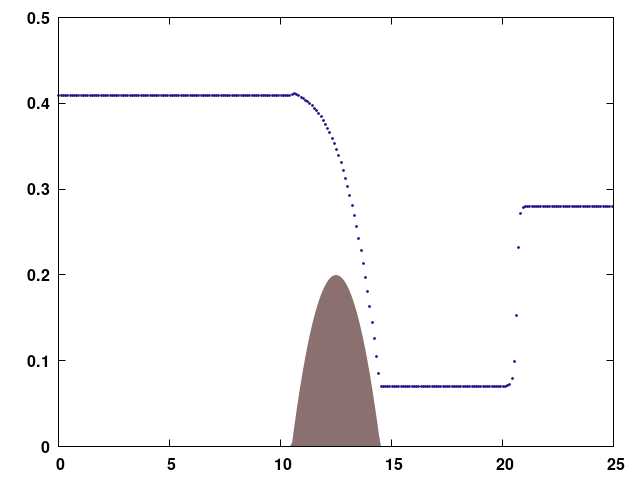}\hspace{-5mm}
       & \includegraphics[width=0.24\textwidth]{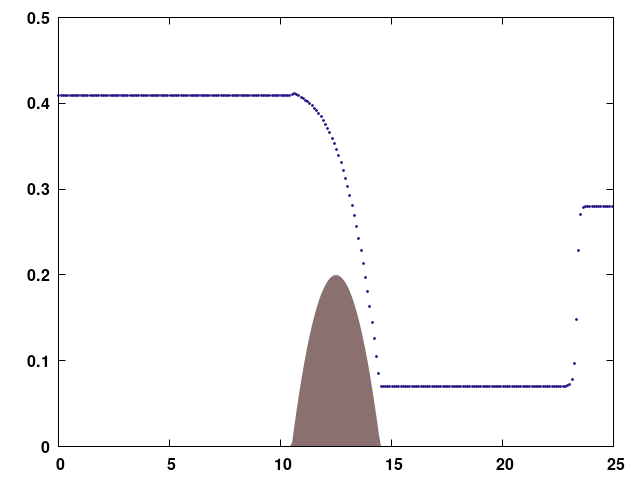}\hspace{-3mm}
       & \begin{turn}{90}\hspace{7mm}\tiny{$\eta=0.23$m}\end{turn}\\
    \begin{turn}{90}\hspace{4mm}\tiny{Water level [m]}\end{turn}\hspace{-5mm}
       & \includegraphics[width=0.24\textwidth]{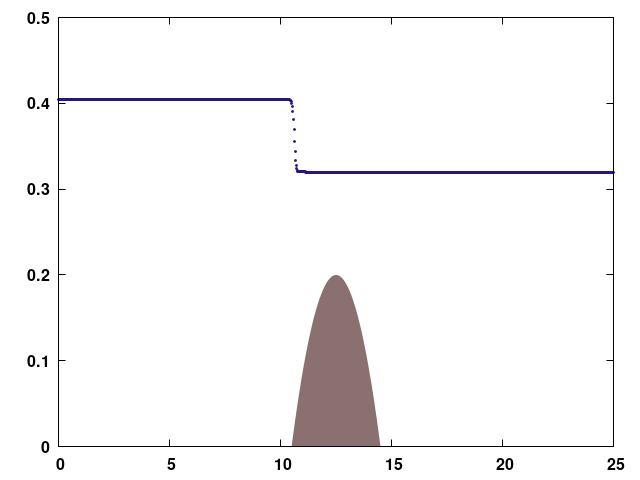}\hspace{-5mm}
       & \includegraphics[width=0.24\textwidth]{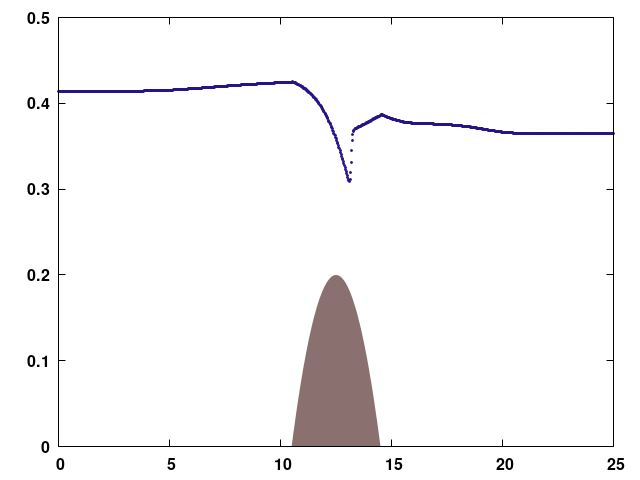}\hspace{-5mm}
       & \includegraphics[width=0.24\textwidth]{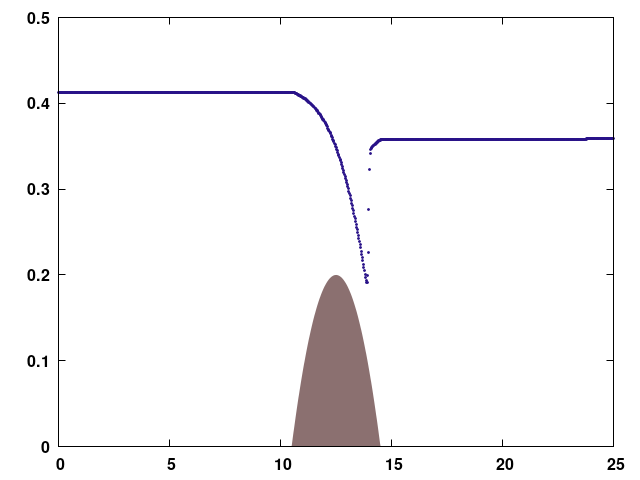}\hspace{-5mm}
       & \includegraphics[width=0.24\textwidth]{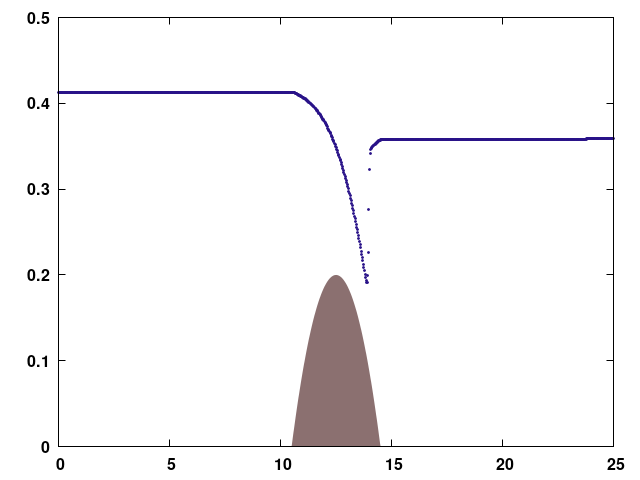}\hspace{-3mm}
       & \begin{turn}{90}\hspace{7mm}\tiny{$\eta=0.32$m}\end{turn}\vspace{-2mm}\\
    {} & \hspace{5mm}\tiny{$x$ [m]} & \hspace{5mm}\tiny{$x$ [m]} & \hspace{5mm}\tiny{$x$ [m]} & \tiny{$x$ [m]}&
  \end{tabular}
  \caption{Snapshots from the steady solution of a flow over
    a bump.  Moving (first row) and steady (second row) shocks
    for $q_0=0.18$m$^2$/s.}
  \label{fig_validate_shm_shs_case2}
\end{figure}

\medskip
\underline{Propagation of a wave}
\medskip
\newline
\indent The second test problem concerns the propagation of a wave
over a bump.  This kind of the test was firstly considered
by LeVeque \cite{leveque}.  The soil topography is defined
by
\begin{equation}
  \label{validation.eq-05}
  z(x) = 
  \left\{
    \begin{array}{ll}
      0.25(\cos{(\pi x/0.1)}+1), & {\rm if}\; |x|<0.1\\
      0, & {\rm otherwise}
    \end{array}
  \right.
\end{equation}
over the domain of flow $\Omega = [-1.25,1.25]$.  The
initial datum is a flat surface perturbed by a small crest
\begin{equation}
  \label{validation.eq-06}
  (z+h)|_{t=0} = 
  \left\{
    \begin{array}{ll}
      1+\epsilon & {\rm if}\; -0.4<x<-0.3\\
      1, & {\rm otherwise}
    \end{array}
  \right. .
\end{equation}
There are no boundary conditions.  The water level on
$\Omega$ at a given moment of time calculated with our
numerical scheme is pictured in
Figure~\ref{fig_validate-wave_over_bump} for
$x\in [-0.5,0.5]$.
\begin{figure}[h!]
  \centering
  \begin{tabular}{ cccc }
    {} & \hspace{5mm}\tiny{$\epsilon=0.2$} & 
    \hspace{5mm}\tiny{$\epsilon=0.2$} & 
    \hspace{5mm}\tiny{$\epsilon=0.01$}\\
    \begin{turn}{90}\hspace{8mm}\tiny{Water level [m]}\end{turn}\hspace{-5mm}
       &\includegraphics[width=0.32\linewidth]{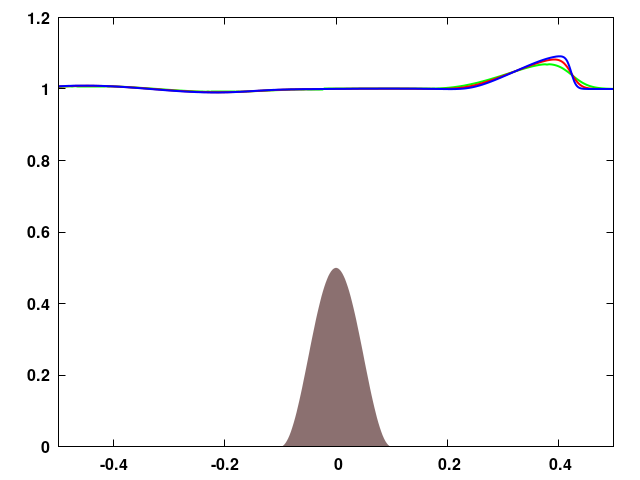}\hspace{-5mm}
       &\includegraphics[width=0.32\linewidth]{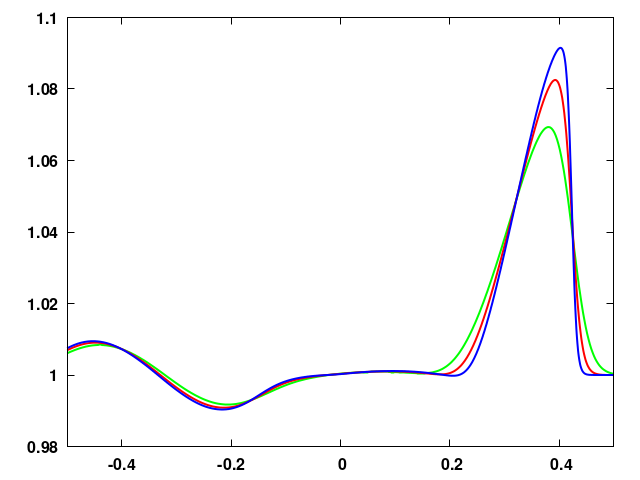}\hspace{-5mm}
       &\includegraphics[width=0.32\linewidth]{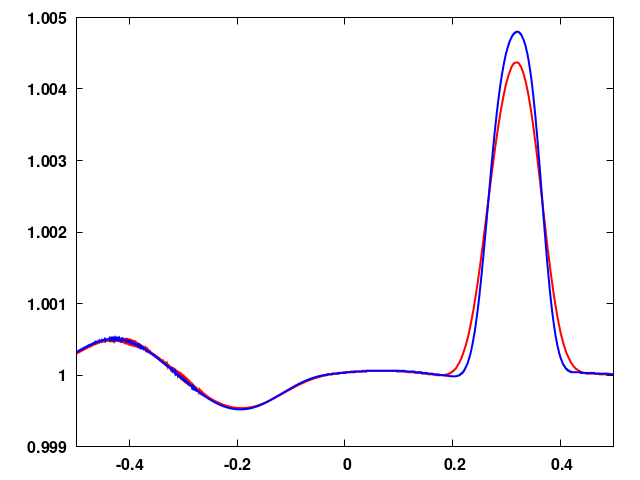}\vspace{-3mm}\\
    {} & \hspace{5mm}\tiny{$x$ [m]} & \hspace{5mm}\tiny{$x$ [m]} & \hspace{5mm}\tiny{$x$ [m]}
  \end{tabular}
  \caption{Snapshot at $t=0.7$s from the propagation of a
    wave over a bump.  This test problem was accomplished
    for the gravitational acceleration $g=1$.  Left picture
    presents the soil topography and the water level given
    by the numerical solution with $N$ interior points:
    $1000$ - green, $2000$ - red, $4000$ - blue.  The
    picture in the middle represents a zoom of the left
    image.  A similar zoom for the numerical solution (with
    $2000$ and $4000$ interior points) for a smaller height
    of the crest is drawn in the right picture.}
  \label{fig_validate-wave_over_bump}
\end{figure}

\subsubsection{Thacker Problem}
The oscillation of the water level in a convex soil surface
or the oscillatory motion of the sea water in the shoreline
are very natural phenomena.  A very difficult problem here
is that a singularity appears at the wet/dry contact between
water and soil surfaces.  Different numerical schemes were
proposed to solve this kind of problem and among them we
mention \cite{cell_dray}.

Thacker problem is a useful test to check the ability of a
numerical scheme to catch and highlight the propagation of a
wet/dry front.  Moreover, this problem can be exactly solved
\cite{thacker, sampson}.  Thus, once can take advantage of
its analytic solution to verify the accuracy of the
numerical scheme.  Thacker solution for 1-D shallow water
equations was obtained in the absence of vegetation
($\theta = 1$), for $\mathfrak{t}^{p} = \boldsymbol{0}$ and
$\mathfrak{t}^{s} = h \tau \boldsymbol{v}$ in
(\ref{swe_vegm_rm.02}), where $\tau$ is a proportionality
coefficient, for a soil surface $z(x)$ of parabolic shape,
and for a water velocity which does not depend on space
variable, i.e. $\boldsymbol{v}(t,x)=\boldsymbol{f}(t)$.

For the case of a 1D flow, the general solution is given by
\begin{equation}
  \begin{split}
    h(t,x)+z(x) = & \left(\displaystyle\frac{x}{g}-
      \displaystyle\frac{\psi(t)}{2g\lambda_1\lambda_2}+
      \displaystyle\frac{\alpha}{\lambda_1\lambda_2}\right)\psi(t)+\eta_0,\\
    u(t) = & A\exp{\lambda_1 t}+B\exp{\lambda_2 t},
  \end{split}
  \label{eq_sol_thacker}
\end{equation}
where
\begin{equation*}
  \psi(t) = A\lambda_2\exp{\lambda_1 t}+B\lambda_1\exp{\lambda_2 t},
\end{equation*}
$\lambda_1$ and $\lambda_2$ solve
\begin{equation*}
  \left\{
  \begin{split}
    \lambda_1+\lambda_2 = & -\tau\\
    \lambda_1\lambda_2 = & g\partial^2_xz(x)
  \end{split}
  \right.
\end{equation*}
and
\begin{equation*}
  \alpha = \partial_xz(x)-x\partial^2_xz(x).
\end{equation*}
In (\ref{eq_sol_thacker}), $A$ and $B$ are integration
constants.  Depending on $z(x)$ and $\tau$, the solutions
$\lambda_1$ and $\lambda_2$ can be real or complex numbers.
An oscillating solution is obtained if
\begin{equation*}
  \tau^2-4g\partial^2_xz(x)<0.
\end{equation*}
In our numerical experiment, we chose
\begin{equation*}
  \begin{array}{l}
    z(x)=z_m\left(\displaystyle\frac{2x}{L}-1\right)^2,\quad x\in [0,L],\\
    A+B=0; \quad A-B={\rm i}u_0,
  \end{array}
\end{equation*}
where the constants $z_m$, $L$ and $u_0$ are free
parameters.

We compare the results from our numerical scheme (using
different formulas (\ref{fvm_2D_eq.05}) or
(\ref{fvm_2D_eq.05_varianta3}) for the interface values of
$(\theta h)^s$) with the analytic Thacker solution: the
evolution of the errors between exact and numerical
solutions in Figure \ref{fig_th_1} and the dynamics of the
water surface in Figure \ref{fig_th_2}.  In all simulations,
we use $z_m=10 {\rm m}$, $L=6000 {\rm m}$,
$u_0=5 {\rm m}\cdot {\rm s}^{-1}$ and
$\tau=0.001 {\rm s}^{-1}$. The artificial viscosity is
defined by (\ref{eq_visc1}).

\begin{figure}[h!]
  \centering
  \begin{tabular}{ cccc }
    {} & \hspace{5mm}\tiny{$\delta x = 10$m} & 
    \hspace{5mm}\tiny{$\delta x = 5$m} & 
    \hspace{5mm}\tiny{$\delta x = 1$m}\\
    \begin{turn}{90}\hspace{14mm}\tiny{Error}\end{turn}\hspace{-5mm}
       &\includegraphics[width=0.32\linewidth]{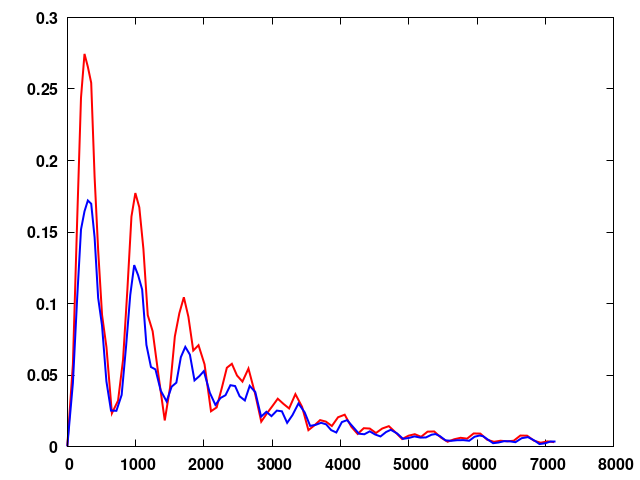}\hspace{-4mm}
       &\includegraphics[width=0.32\linewidth]{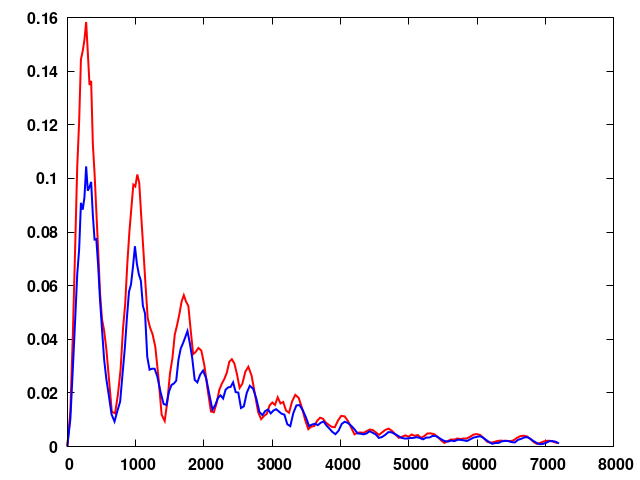}\hspace{-4mm}
       &\includegraphics[width=0.32\linewidth]{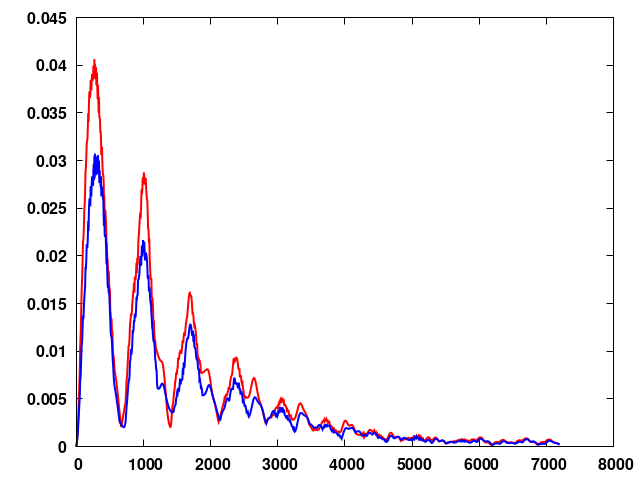}\vspace{-3mm}\\
    {} & \hspace{5mm}\tiny{time [s]} & \hspace{5mm}\tiny{time [s]} & \hspace{5mm}\tiny{time [s]}
  \end{tabular}
  \caption{Thacker Problem: time evolution of the error
    $err=||(hu)^{app}-(hu)^{ex}||_{\infty}$ of discharge
    between the analytic and numerical solutions, when the
    numerical one was calculated with the arithmetic mean
    formula (\ref{fvm_2D_eq.05_varianta3}), blue line, and
    the upwind formula (\ref{fvm_2D_eq.05}), red line, for
    $(\theta h)^s$, respectively.  This evolution of the
    error is represented for three space step-sizes
    $\delta x$ and one can observe that the error is
    decreasing with $\delta x$.}
  \label{fig_th_1}
\end{figure}

\begin{figure}[h!]
  \centering
  \begin{tabular}{ ccccc }
    {} & \hspace{5mm}\tiny{$t = 300$s, $err= 0.0435$} & 
    \hspace{5mm}\tiny{$t = 600$s, $err= 0.0513$} & 
    \hspace{5mm}\tiny{$t = 900$s, $err= 0.0460$} & 
    \tiny{$t = 1200$s, $err= 0.0348$} \\
    \begin{turn}{90}\hspace{4mm}\tiny{Water level [m]}\end{turn}\hspace{-5mm}
       &\includegraphics[width=0.24\linewidth]{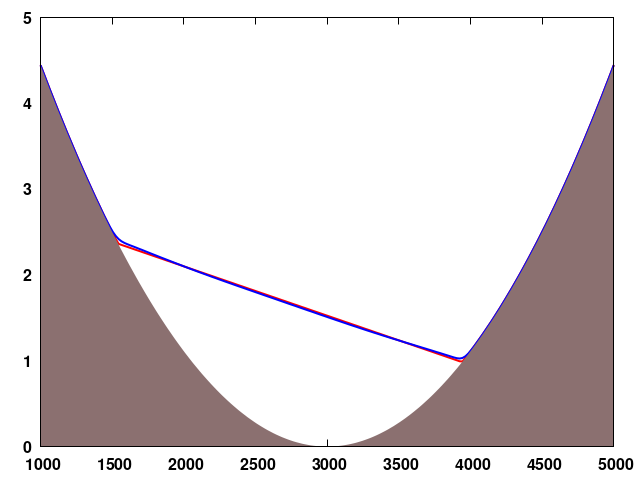}\hspace{-5mm}
       &\includegraphics[width=0.24\linewidth]{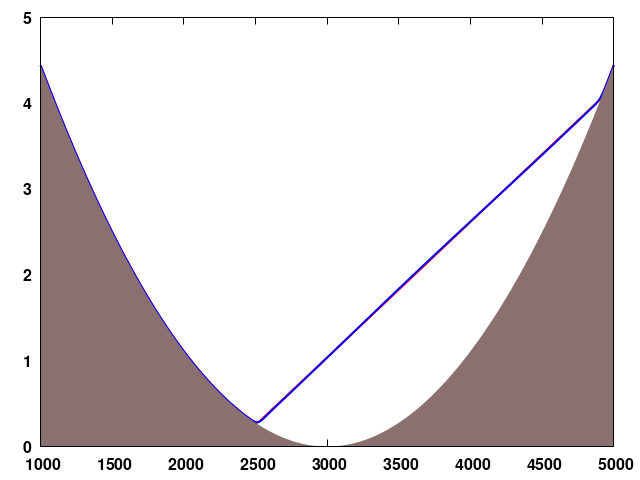}\hspace{-5mm}
       &\includegraphics[width=0.24\linewidth]{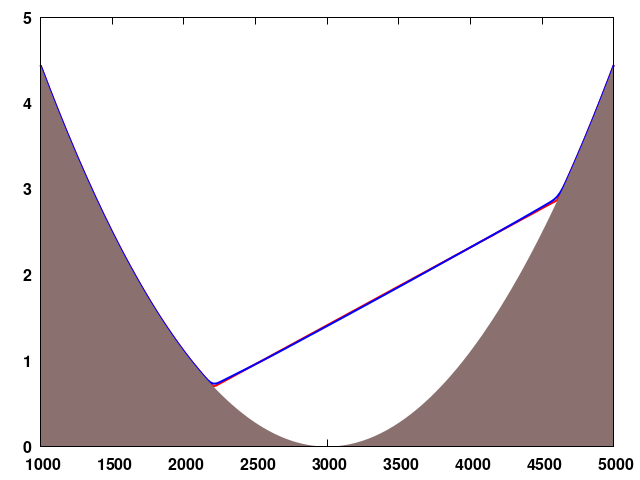}\hspace{-5mm}
       &\includegraphics[width=0.24\linewidth]{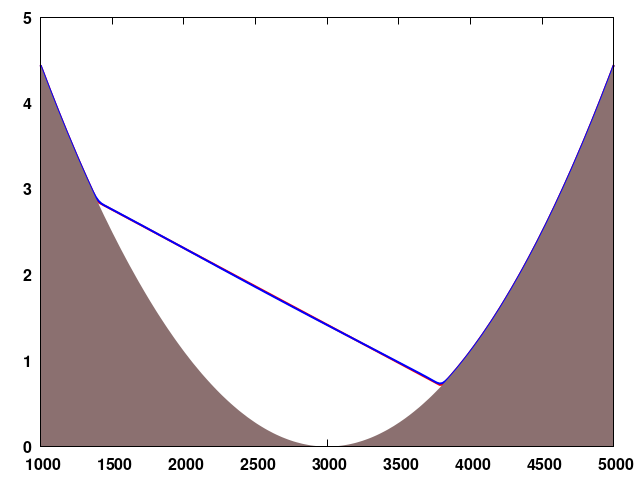}\vspace{-3mm}\\
    {} & \hspace{5mm}\tiny{$x$ [m]} & \hspace{5mm}\tiny{$x$ [m]} & \hspace{5mm}\tiny{$x$ [m]} & \tiny{$x$ [m]}
  \end{tabular}
  \caption{Thacker Problem: dynamics of the water level.
    Numerical (blue) and analytic (red) solutions.  The
    error between these two solutions is given by
    $err=||((h+z)^{app}-(h+z)^{ex})/(h+z)^{ex}||_{\infty}$.}
  \label{fig_th_2}
\end{figure}

\subsubsection{Riemann Problem}
\label{sect_RiemannProblem}
We remind that Riemann problem for shallow water equations
with topography and vegetation requires to solve the
equations for piecewise constant initial data
\[
\left.(h,u)\right|_{t=0}=\left\{
\begin{array}{ll}
  (h_L,u_L),&x<0\\
  (h_R,u_R),&x>0
\end{array}
\right.
\]  
and piecewise constant soil surface and porosity functions
\[
(\theta,z)(x)=\left\{
\begin{array}{ll}
  (\theta_L,z_L),&x<0\\
  (\theta_R,z_R),&x>0
\end{array}
\right. .
\]
The problem was extensively analyzed \cite{LeFloch2007,
  LEFLOCH20117631, toro, han, alcrudo, adrianov} for the
case of constant vegetation, but there are only few results
when the porosity is a variable function
\cite{GUINOT2017133, COZZOLINO201883}.  Perhaps, this
scarcity of results is due to the fact that the problem is
less tractable for the case of jumps in both soil and
porosity functions.

Usually, the solution of the problem is sought as a bunch of
simple waves (shock waves, rarefaction waves and standing
waves) ordered on the base of the entropy principle,
\cite{lax}.  We remark that such a solution may not exist or
it can exist (being unique or non-unique).  We perform a
comparative study on these three cases. The data for each
case are given in Table~\ref{tabel_RP}.
\begin{table}[h!]
  \centering
  \begin{tabular}{|c|c|c|c|}
    \hline
 &{Non-existence}&{Uniqueness}&{Two solutions}\\\hline
    $(h,u)_L$&\multicolumn{3}{c|}{$(0.2,5.0)$}\\\cline{2-4}
    $(h,u)_R$&$(0.6,0.4)$&$(0.6,1.34)$&$(0.6,0.4)$\\\hline
    $(\theta,z)_L$&$(0.3,1.0)$& \multicolumn{2}{c|}{$(0.8,1.0)$}\\\cline{2-4}
    $(\theta,z)_R$&\multicolumn{3}{c|}{$(1.0,1.2)$}\\\hline    
  \end{tabular}
  \caption{Data for Riemann problem}
  \label{tabel_RP}
\end{table}
The choice of these data was suggested by some cases
analyzed in \cite{LEFLOCH20117631}.

\begin{figure}[htbp]
  \centering
  \begin{tabular}{m{0.001\linewidth}m{0.22\linewidth}m{0.22\linewidth}m{0.22\linewidth}m{0.22\linewidth}}
    {}
    &{\begin{center}Non-existence\end{center}}
    &{\begin{center}Uniqueness\end{center}}
    &\multicolumn{2}{c}{Two solutions}\\
    {}
    &\tiny{$z_R - z_L = 0.2, \, \frac{\theta_L}{\theta_R} = 0.3$}
    &\tiny{$z_R - z_L = 0.2, \, \frac{\theta_L}{\theta_R} = 0.8$}
    &\multicolumn{2}{c}{$z_R - z_L = 0.2, \, \frac{\theta_L}{\theta_R} = 0.8$}\\
    \begin{turn}{90}\hspace{1mm}\tiny{u [m/s]}\end{turn}
    &\includegraphics[width=\linewidth]{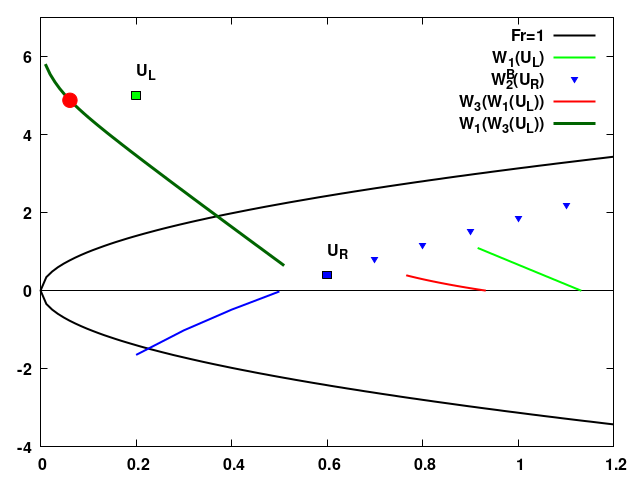}
    &\includegraphics[width=\linewidth]{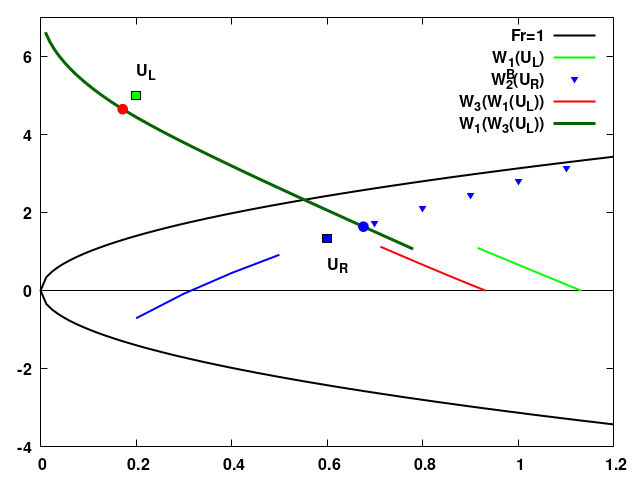}
    &\includegraphics[width=\linewidth]{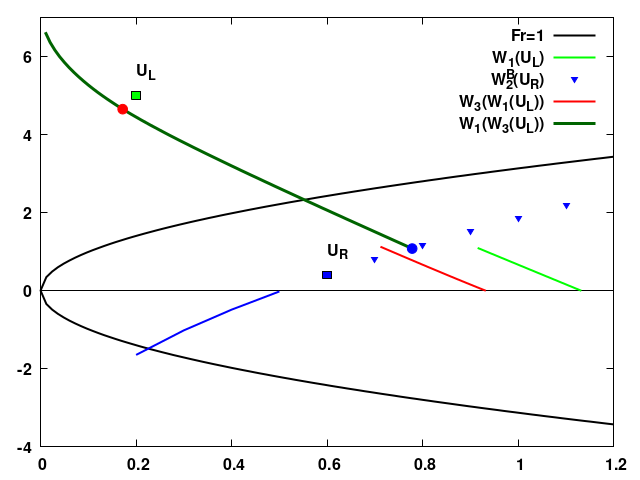}
    &\includegraphics[width=\linewidth]{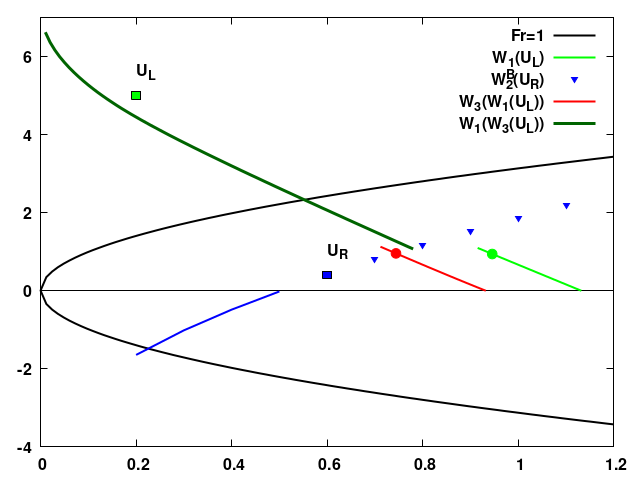}\\
    & \hspace{13mm}\tiny{h [m]}
    & \hspace{13mm}\tiny{h [m]}
    & \hspace{13mm}\tiny{h [m]}
    & \hspace{13mm}\tiny{h [m]}\\
    &&&&\\
    \begin{turn}{90}\tiny{Water level [m]}\end{turn}
    &\includegraphics[width=\linewidth]{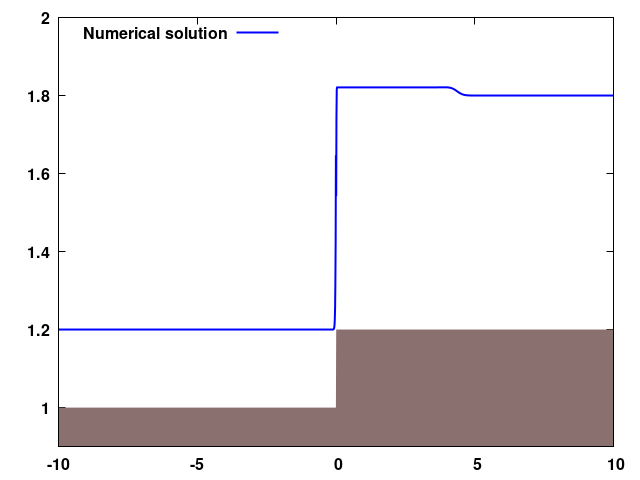}
    &\includegraphics[width=\linewidth]{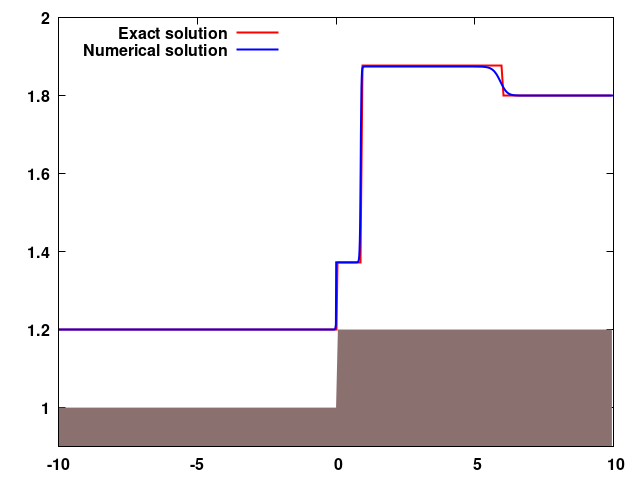}
    &\hspace{2mm}\begin{tabular}{c} Solution \\ not revealed \end{tabular}
    &\includegraphics[width=\linewidth]{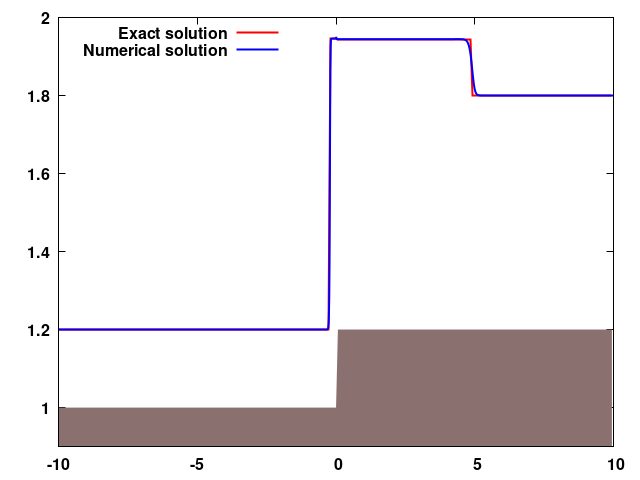}\\
    \begin{turn}{90}\hspace{1mm}\tiny{Water level [m]}\end{turn}
    &\includegraphics[width=\linewidth]{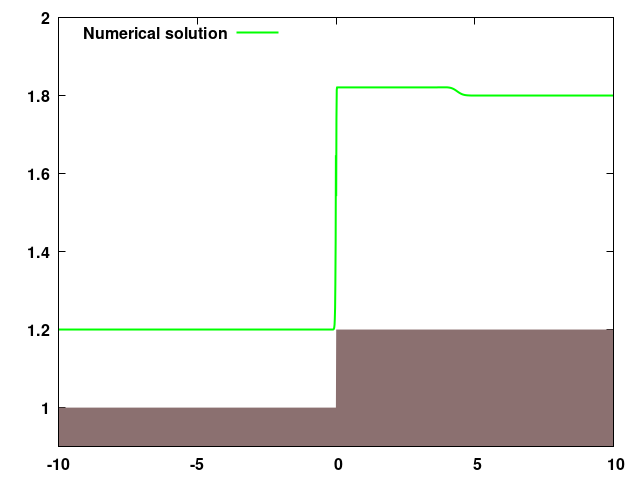}
    &\includegraphics[width=\linewidth]{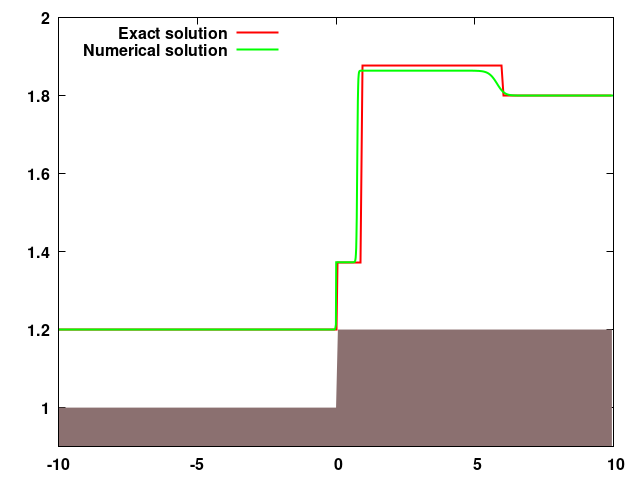}
    &\includegraphics[width=\linewidth]{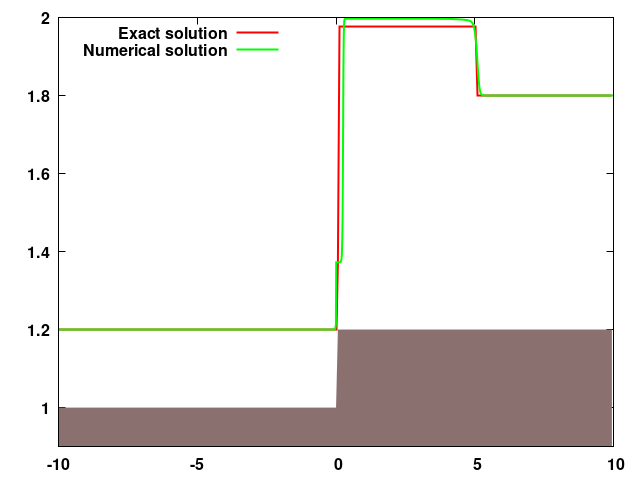}
    &\hspace{2mm}\begin{tabular}{c}Solution \\ not revealed \end{tabular}\\		       
    {}
    & \hspace{13mm}\tiny{x [m]}
    & \hspace{13mm}\tiny{x [m]}
    & \hspace{13mm}\tiny{x [m]}
    & \hspace{13mm}\tiny{x [m]}
  \end{tabular}
  \caption{Riemann problem for three different initial data.
    The first row of pictures contains the simple wave
    curves in the phase space $(h,u)$.  The last two rows of
    pictures include the graphics of the numerical and exact
    solutions (if they exist) of the free water surface at
    $t=1.5$s.  The numerical solutions represented in the
    second and third rows are calculated using
    formulas~(\ref{fvm_2D_eq.05_varianta3}) and
    (\ref{fvm_2D_eq.05}), respectively.}
  \label{RP_graph}
\end{figure}
The analytic solutions for the Riemann problem are
calculated using the path connection introduced in
\cite{sds-riemann-constanta} and the algorithms described in
\cite{smc-rp-19}.

Figure~\ref{RP_graph} summarizes the results for the Riemann
problem in the three cases (non-existence, uniqueness and
two solutions) associated to the initial data given in
Table~\ref{tabel_RP}.
\bigskip

\subsection{External Validation}
\label{sect_ExternalValidation}
Numerical data incorporate two main errors: one given by the
approximation of the physical phenomena by a mathematical
model and the other one given by the approximation of the
solution of the mathematical model.  Consequently, a good
fit of the numerical data with experimental data validates
both the mathematical model and the numerical scheme.  Thus,
for real world applications, it is essential to confront the
numerical data with the experimental data.

We test the mathematical model and the numerical solution
considering the following problems:
\begin{enumerate}
  \item dam break flow over a triangular bump,
  \item downslope flow through rigid vegetation,
  \item flow on a convex-concave vegetated soil surface,
  \item simulation on Paul's Valley (2D).
\end{enumerate}
The first two tests simulate solutions for experiments with
measured data, while the last two tests target the
qualitative behavior of the solution.

\subsubsection{Dam Break Flow Over a Triangular Bump}
First, we consider the CADAM test case of the dam break flow
propagation in an impermeable channel with a symmetric
triangular bump \cite{cadam, ERPICUM20102143, Hsu2019}.  The
layout of this experiment and the initial state of the water
at rest are presented in Figure~\ref{fig_sch_tr_bump}.
\begin{figure}[!htbp]
  \includegraphics[width=0.99\textwidth]{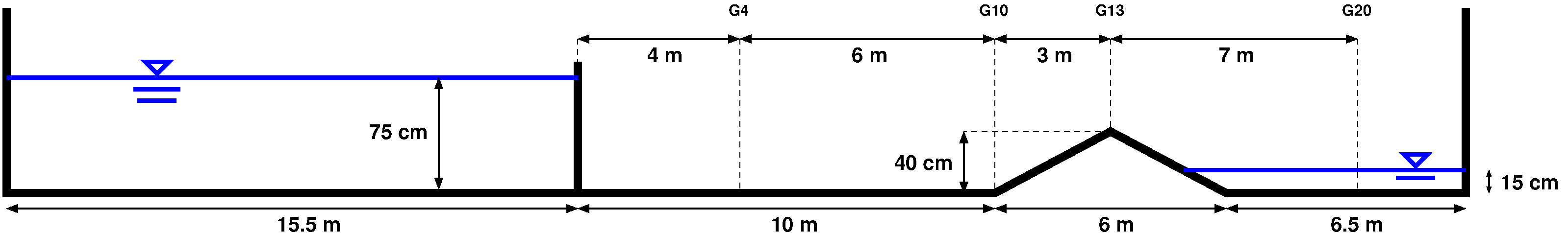}
  \caption{Scheme of the experimental installation for the
    triangular bump test}
  \label{fig_sch_tr_bump}
\end{figure}
Several gauges $G_i$ are placed in the channel to measure
the water depth evolution once the gate retaining the water
volume from the left side is suddenly removed.

In this experiment, the cover plant is absent: $\theta=1$.
For the frictional term
$\mathfrak{t}^{s}_a=\alpha_s(h)|\boldsymbol{v}|v_a$, we
consider the Darcy-Weisbach formula
\begin{equation}
  \alpha_s(h) = \tau,
  \label{eq_Darcy-Weisbach}
\end{equation}
where $\tau$ is constant.  For this setup, we performed
several numerical simulations with different values for
$\tau$.  The experimental data are extracted from the
graphics reported in \cite{Hsu2019}, Figure~5.  The results
are synthesized in Figure~\ref{fig_flow_tr_bump}.
\begin{figure}[!htbp]
  \centering
  \begin{tabular}{ ccccc }
    {} & \hspace{5mm}\tiny{G4} & \hspace{5mm}\tiny{G10} & \hspace{5mm}\tiny{G13} & \tiny{G20} \\
    \begin{turn}{90}\hspace{4mm}\tiny{Water depth [m]}\end{turn}\hspace{-5mm}
       & \includegraphics[width=0.24\textwidth]{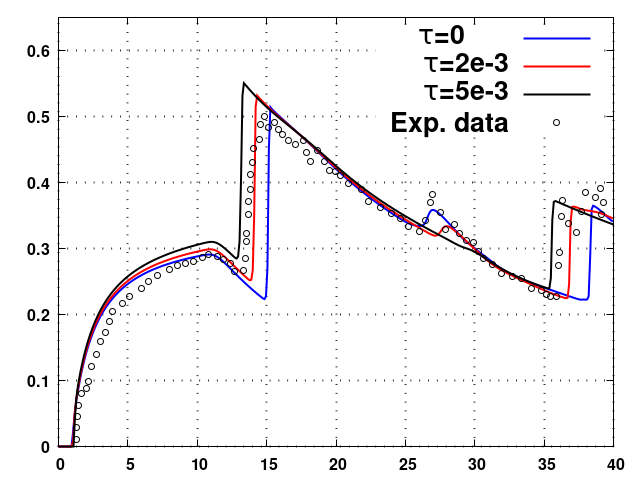}\hspace{-5mm}
       & \includegraphics[width=0.24\textwidth]{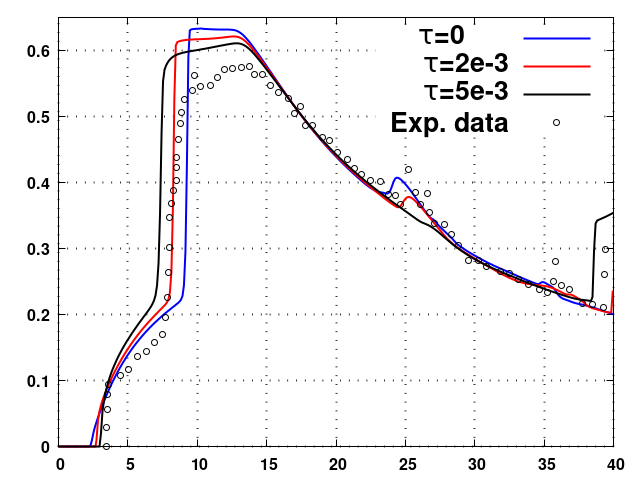}\hspace{-5mm}
       & \includegraphics[width=0.24\textwidth]{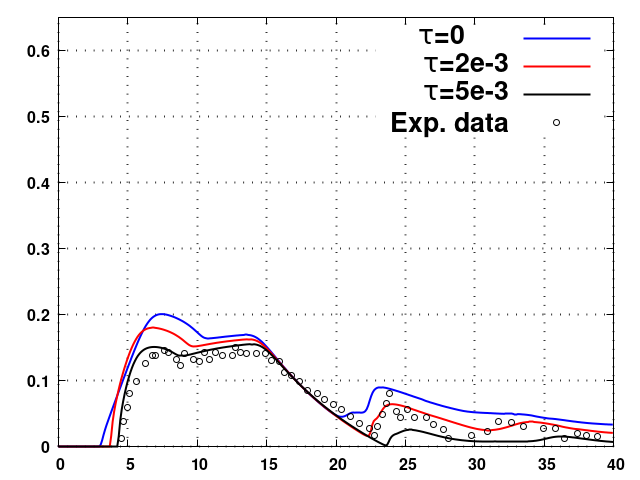}\hspace{-5mm}
       & \includegraphics[width=0.24\textwidth]{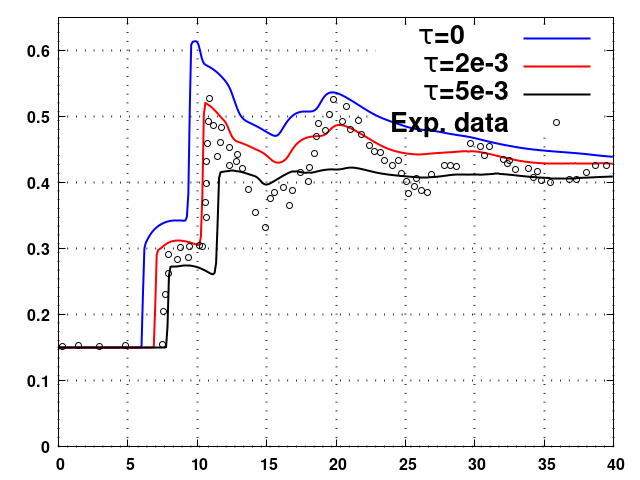}\vspace{-3mm}\\
    {} & \hspace{5mm}\tiny{Time [s]} &\hspace{5mm}\tiny{Time [s]} & \hspace{5mm}\tiny{Time [s]} & \tiny{Time [s]}
  \end{tabular}
  \caption{Flow over a triangular bump}
  \label{fig_flow_tr_bump}
\end{figure}
Note that our solution is in agreement to the measured one
(for all four gauges) for $\tau \approx 2e-3$.  

We repeated the numerical simulations using Stickler-Manning
formula
\begin{equation}
  \alpha_s(h)=\frac{n^2}{h^{1/3}}
  \label{Stickler-Manning}
\end{equation}
with $n^2 \approx 0.003$m$^{1/3}$ and we have seen that
there is no significant difference from the previous case.

\subsubsection{Flow Over a Slope with Vegetation}
The flow in the presence of vegetation is more complex than
the one on bare soil and perhaps this explains a scarcity of
the reported data.

Here, we compare our results with the ones reported in
\cite{Dupuis2016}.  Briefly, the experimental installation
consists of an $18$m long and $1$m width laboratory flume
with a longitudinal bottom slope $S=1.05{\rm mm}/{\rm m}$.
The vegetation is modelled using uniformly distributed
emergent circular cylinders of radius $R_c=5$mm.
The density of vegetation is $N=81$ cylinder/m$^2$.  This
type of vegetation allows the estimation of the porosity by
\[
\theta =1-N\pi R_c^2= 0.99336.
\]
To obtain the parameters $\alpha_p$ and $\alpha_s(h)$, we
use the steady state values of $h$ and $u$ reported
\cite{Dupuis2016} for different values of the water
discharge $q = hu$.  These values can be theoretically
estimated from relations
\[
\begin{array}{l}
  \theta h u= q,\\
  g\theta h S=(\alpha_p h (1-\theta)+\theta \alpha_s(h))u^2.
\end{array}
\]
One can estimate $\alpha_s(h)$ from the data for bare soil
and then use it to calculate $\alpha_p$ using the data for
uniformly distributed vegetation.  Thus, if Darcy-Weisbach
formula is used, one finds that $\tau$ runs in
$[6.508084, 9.00375]\cdot 10^{-3}$ and $\alpha_p$ takes
values in $[73.20, 80.20]$.  For the case of
Strickler-Manning formula, one finds
$n^2\in [2.893622, 3.178546]\cdot 10^{-3}$ and
$\alpha_p\in [72.27832, 77.97572]$.

These estimations indicate that Strickler-Manning formula is
more appropriate than Darcy-Weisbach formula to be used for
this kind of flow.

For testing numerical data, we choose the experiment of
water flow on a partial vegetated bottom surface.  We use
Strickler-Manning formula with the parameters
\[
n^2=3.005785\cdot 10^{-3}{\rm m}^{1/3},\quad
\alpha_p=74.7643{\rm m}^{-1}.
\]
These values correspond to the Strickler $K=57.099$ and drag
coefficient $C_D=1.1738$.

Figure~\ref{fig_flow_slope} includes the numerical and
experimental data for six different steady configurations.
The experimental data are extracted from the graphics
reported in \cite{Dupuis2016}, Figure~8.
\begin{figure}[h!]
  \centering
  \begin{tabular}{ cccc }
    {} & \hspace{5mm}\tiny{$Q = 7$ l/s} & 
    \hspace{5mm}\tiny{$Q = 15$ l/s} & 
    \hspace{5mm}\tiny{$Q = 21$ l/s}\\
    \begin{turn}{90}\hspace{8mm}\tiny{Water level [mm]}\end{turn}\hspace{-5mm}
       &\includegraphics[width=0.32\linewidth]{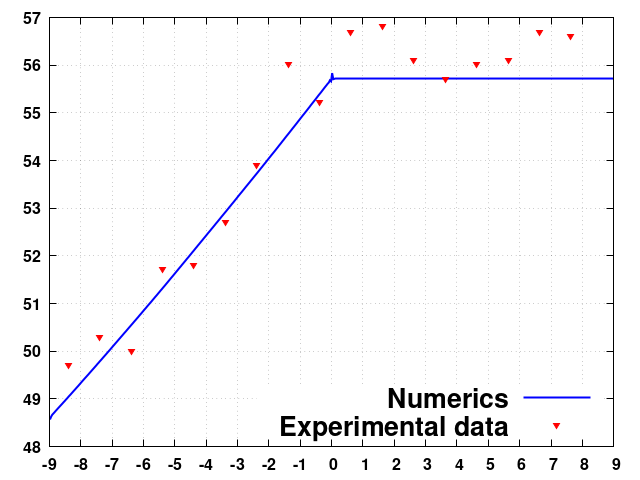}\hspace{-5mm}
       &\includegraphics[width=0.32\linewidth]{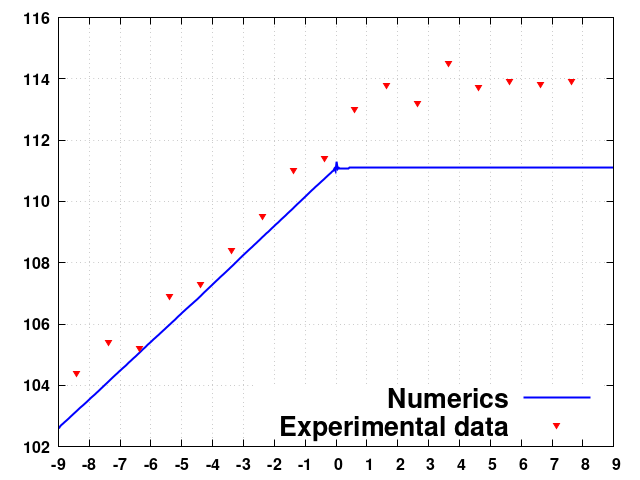}\hspace{-5mm}
       &\includegraphics[width=0.32\linewidth]{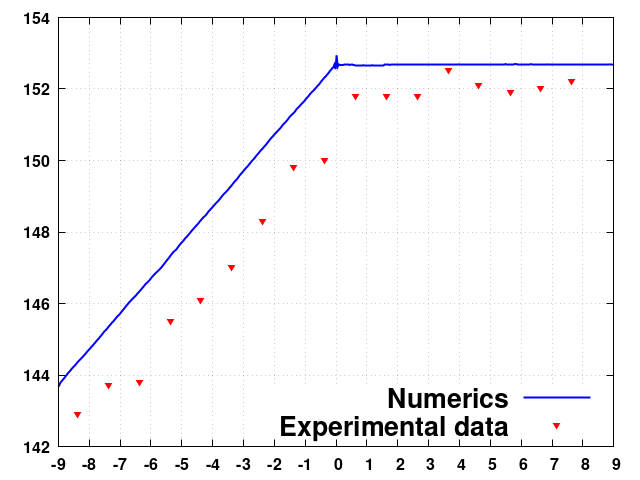}\vspace{-3mm}\\
    {} & \hspace{5mm}\tiny{x [m]} & \hspace{5mm}\tiny{x [m]} & \hspace{5mm}\tiny{x [m]}\\
    {} & \hspace{5mm}\tiny{$Q = 7$ l/s} & 
    \hspace{5mm}\tiny{$Q = 15$ l/s} & 
    \hspace{5mm}\tiny{$Q = 50$ l/s}\\
    \begin{turn}{90}\hspace{8mm}\tiny{Water level [mm]}\end{turn}\hspace{-5mm}
       &\includegraphics[width=0.32\linewidth]{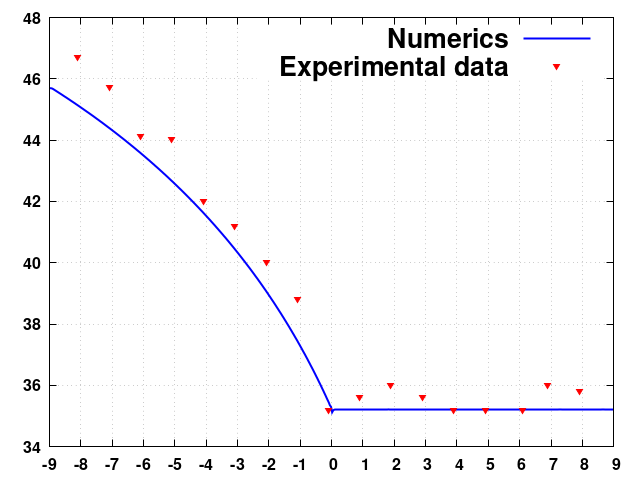}\hspace{-5mm}
       &\includegraphics[width=0.32\linewidth]{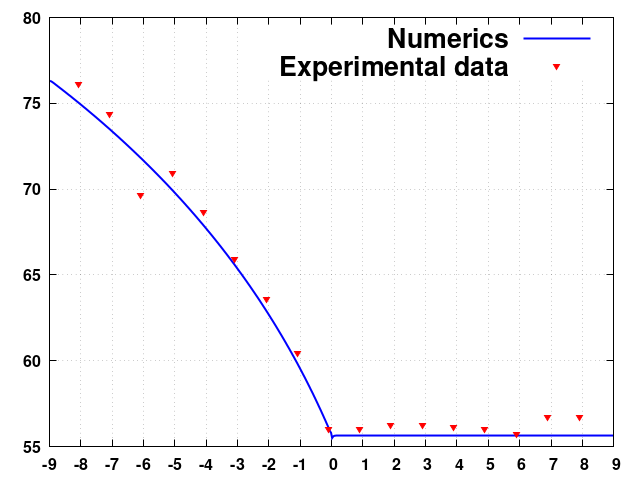}\hspace{-5mm}
       &\includegraphics[width=0.32\linewidth]{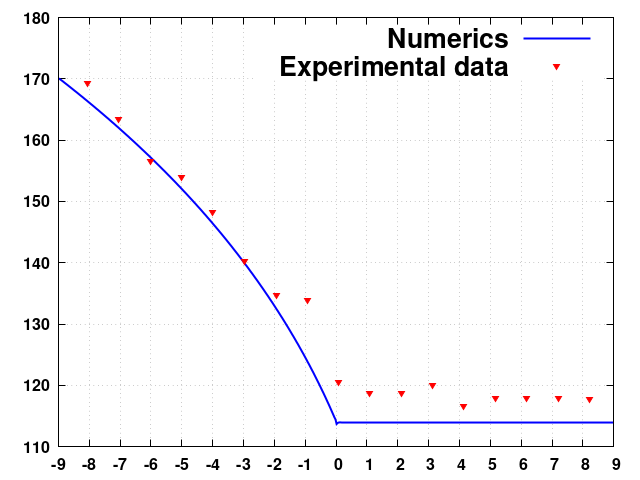}\vspace{-3mm}\\
    {} & \hspace{5mm}\tiny{x [m]} & \hspace{5mm}\tiny{x [m]} & \hspace{5mm}\tiny{x [m]}
  \end{tabular}
  \caption{Flow over a slope with vegetation}
  \label{fig_flow_slope}
\end{figure}

\subsubsection{Flow Over a Convex-Concave Vegetated Soil
  Surface}
\label{sect_FlowOverConvex}
There are two reasons for considering the flow over a
convex-concave vegetated soil surface: to highlight the
influence of the cover plant on the water flow dynamics and
to emphasize the ability of our numerical scheme not only to
preserve the lake as an equilibrium point, but to
asymptotically reach it.

We consider a topography with the surface given by a fourth
degree polynomial on $[-L,L]$
\begin{equation*}
  z(x) = h_m 
  \left[ 
    \sqrt{\frac{h_b}{h_m}} - \frac{1}{L^2}
    \left( 1 + \sqrt{\frac{h_b}{h_m}} \right)
    x^2
  \right]^2,
\end{equation*}
where $L=1500$m, $h_m=50$m and $h_b=5$m.  For this
configuration, the mathematical equilibrium point may
consist of one or two lakes.  Our model is run for two
different configurations, without vegetation and with
vegetation, respectively.  We use Stickler-Manning formula
(\ref{Stickler-Manning}) with $n^2 = 0.003{\rm m}^{1/3}$.
The initial conditions
\begin{equation*}
  h_0 = 
  \left\{
    \begin{array}{ll}
      2, & {\rm if}\; x \leq - L/3\\
      0, & {\rm if}\; x > - L/3
    \end{array}
  \right.
  , \quad v_0 = 0
\end{equation*}
are identical for both configurations.  For the
configuration with vegetation, we add
\begin{equation*}
  \alpha_p=79.3, \quad \theta = 
  \left\{
    \begin{array}{ll}
      0.9, & {\rm if}\; -L/3 \leq x \leq 0\\
      1, & {\rm otherwise}
    \end{array}
  \right. .
\end{equation*}

\begin{figure}[htbp]
  \centering
  \begin{tabular}{ ccc }
    &\small{Bare soil} & \small{Vegetated soil}\\
    \begin{turn}{90}\hspace{10mm}\tiny{Water level
        [m]}\end{turn}
       &\includegraphics[width=0.35\linewidth]{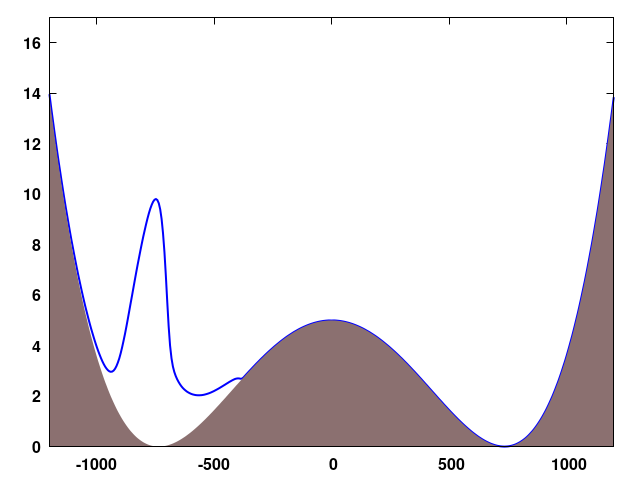}
       &\includegraphics[width=0.35\linewidth]{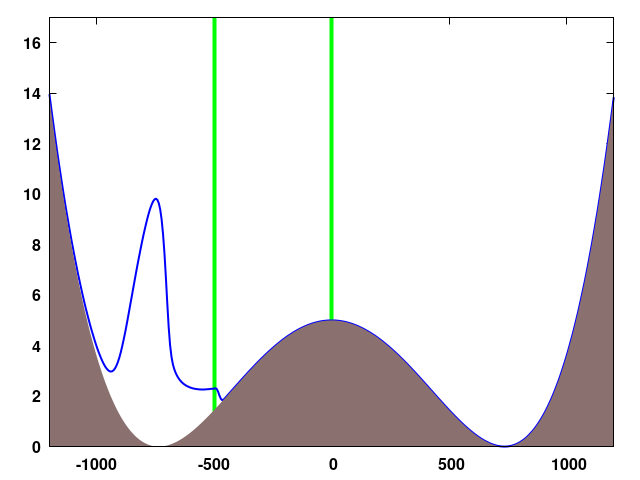}\\
    \begin{turn}{90}\hspace{10mm}\tiny{Water level
        [m]}\end{turn}
       &\includegraphics[width=0.35\linewidth]{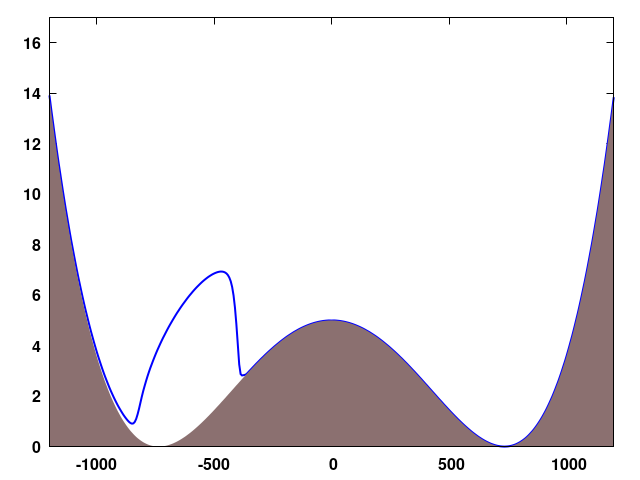}
       &\includegraphics[width=0.35\linewidth]{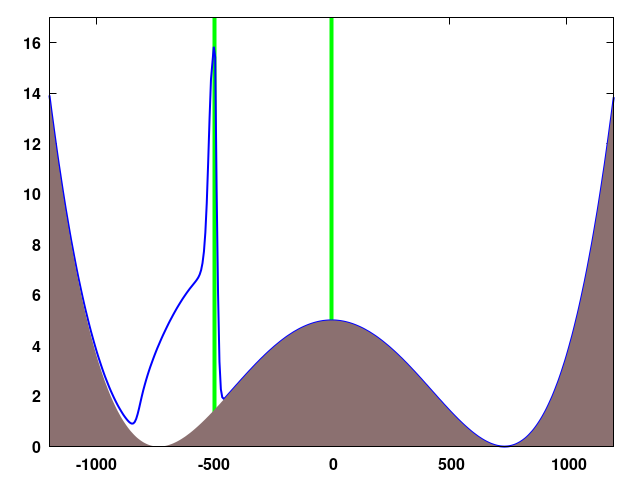}\\
    \begin{turn}{90}\hspace{10mm}\tiny{Water level
        [m]}\end{turn}
       &\includegraphics[width=0.35\linewidth]{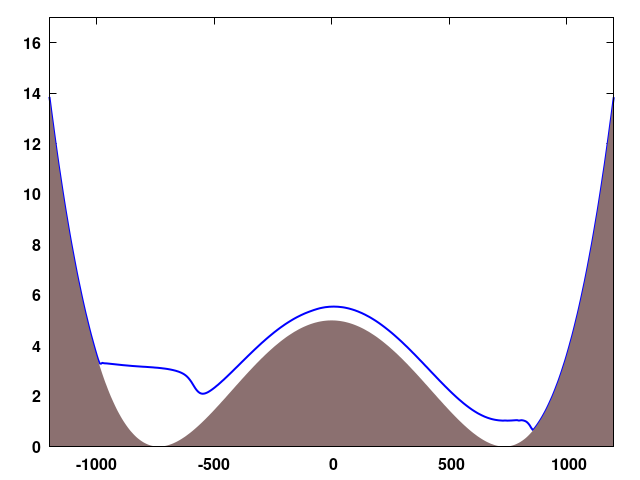}
       &\includegraphics[width=0.35\linewidth]{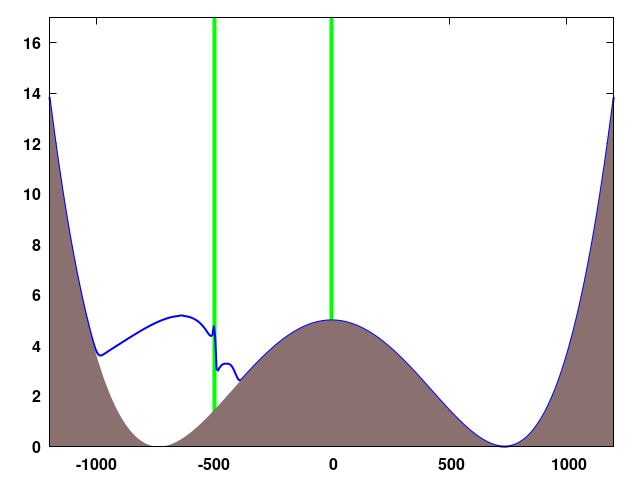}\\
    \begin{turn}{90}\hspace{10mm}\tiny{Water level
        [m]}\end{turn}
       &\includegraphics[width=0.35\linewidth]{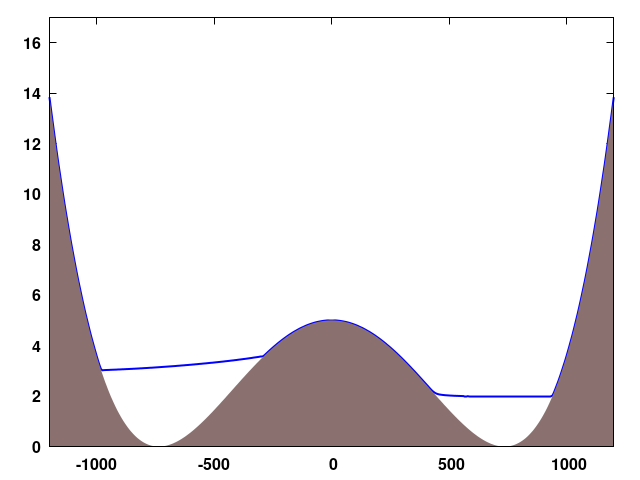}
       &\includegraphics[width=0.35\linewidth]{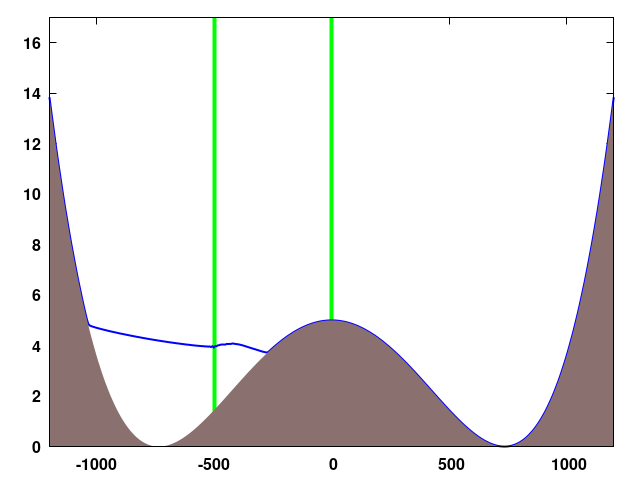}\\
    {} & \tiny{$x$ [m]} & \tiny{$x$ [m]}
  \end{tabular}
  \caption{Dumping effect of water flow.  The left column of
    images shows the states of the solution at four
    different moments of time ($t=40,\; 60,\; 200,\; 400$ s)
    for the configuration without vegetation.  The solution
    for the configuration with vegetation is represented in
    the right column at the same moments of time.  Remark
    the tendency of the water to form two lakes in the first
    configuration and only one lake in the second
    configuration.}
  \label{fig_1Dknoll}
\end{figure}
Figure~\ref{fig_1Dknoll} illustrates the numerical solution
of the water surface for these two configurations.

\subsubsection{Simulation on Paul's Valley}
\label{sect_SimulationonPaulValley}
Unfortunately, we do not have data for the water
distribution, plant cover density and measured velocity
field in a hydrographic basin to compare our numerical
results with.  However, to be closer to reality, we have
used GIS data for the soil surface of Paul's Valley and
accomplished a theoretical experiment: starting with a
uniform water depth on the entire basin and using different
cover plant densities, we run the 2D version of our scheme
(a flow modulus of ASTERIX) on a hexagonal network.  To
generate an arbitrary hexagonal network from an arbitrary
raster GIS data, we use the method introduced in
\cite{sds-ADataPortingTool}.
Figure~\ref{fig_2Dcomparison} shows that the numerical
results are consistent with direct observations concerning
the water time residence in the hydrographic basin.
\begin{landscape}
  \begin{figure}[ht]
    \centering
    \includegraphics[width=0.6\textwidth]{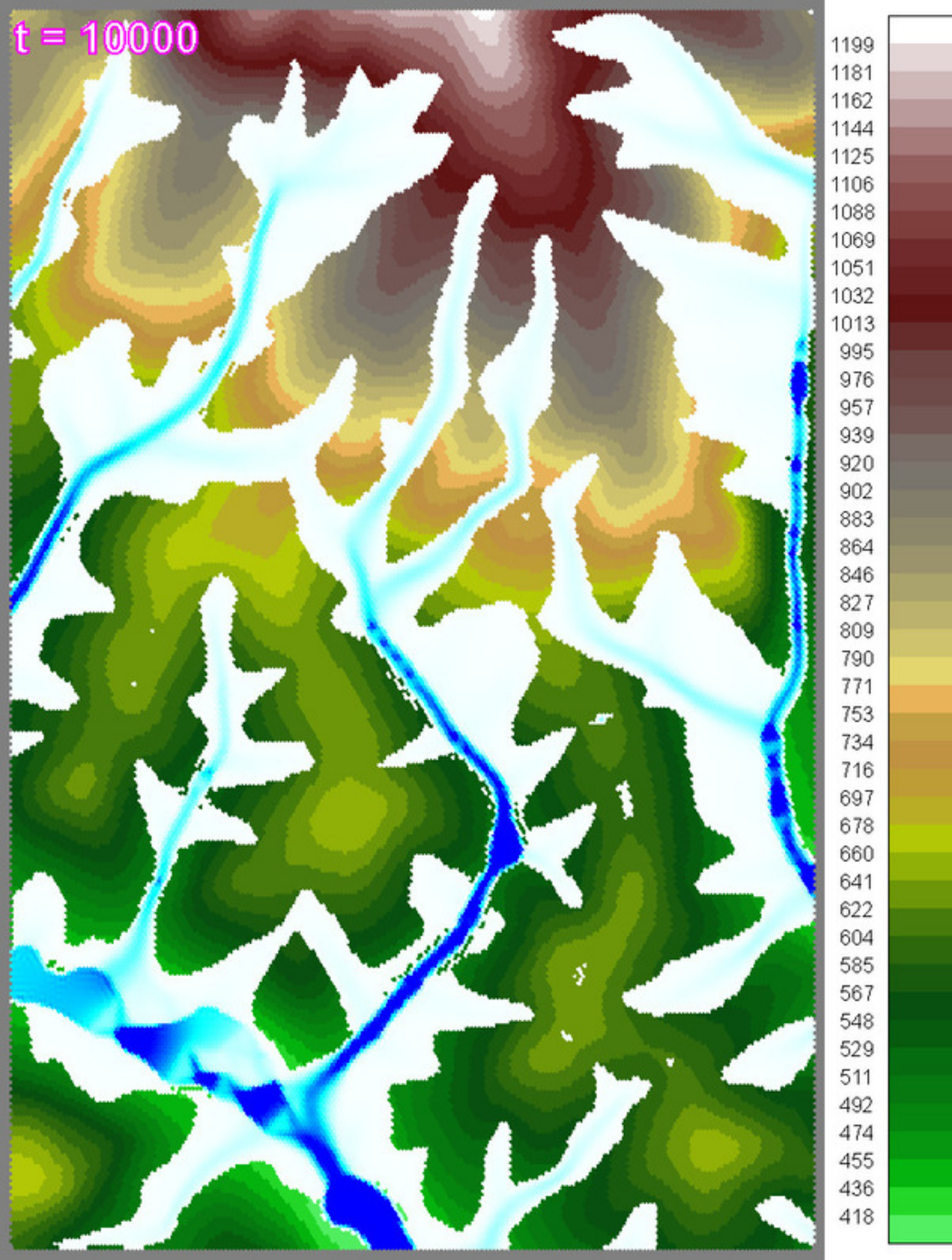}
    \hspace{1cm}
    \includegraphics[width=0.6\textwidth]{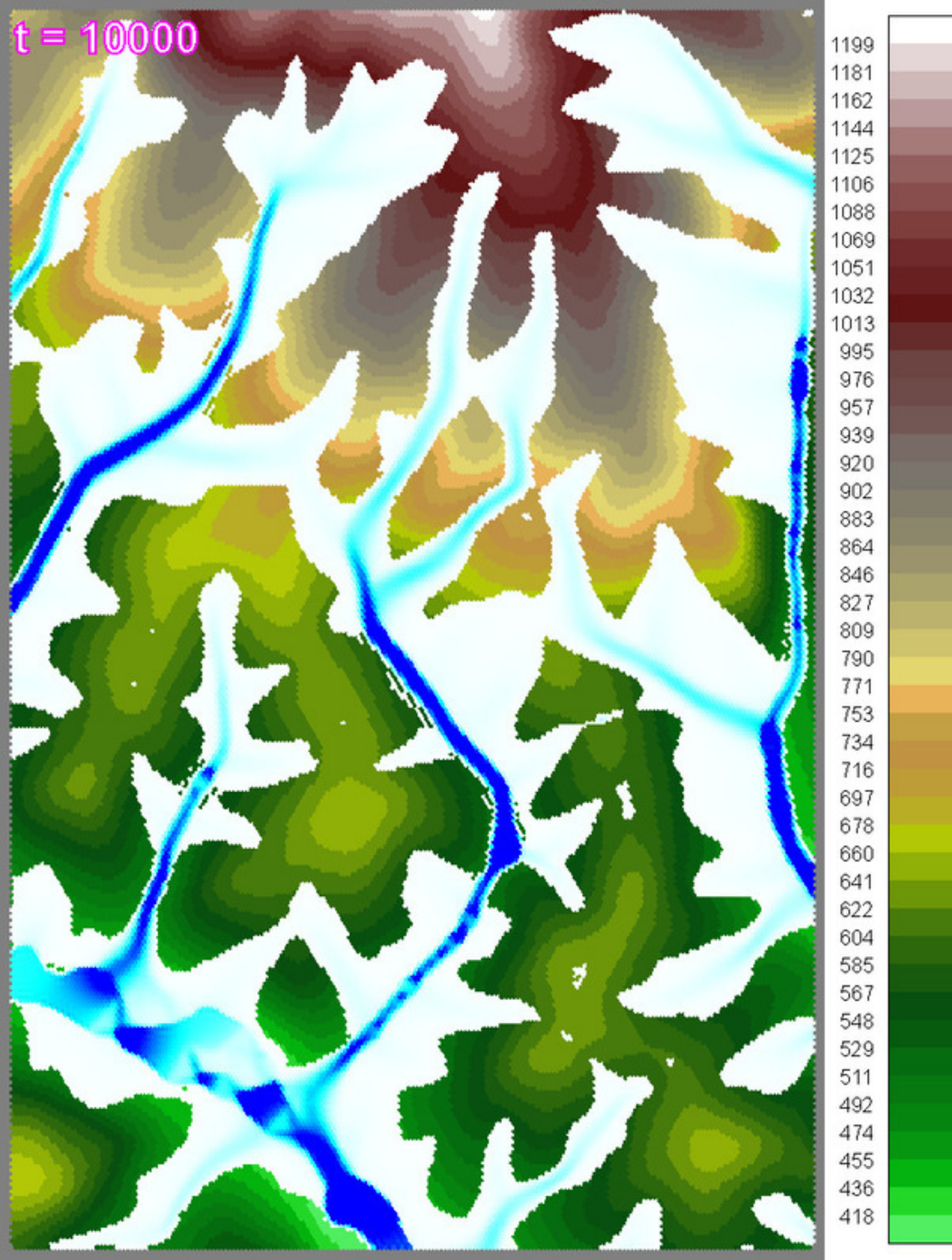}
    \caption{Snapshot of water distribution in Paul's Valley
      hydrographic basin.  Direct observations indicate that
      the water time residence depends on the density of the
      cover plant.  Our numerical data are consistent with
      terrain observations: the water drainage time is
      greater for the case of higher cover plant density.
      $\theta=0.97$ and $\theta=0.65$ for the left and right
      picture, respectively.}
    \label{fig_2Dcomparison}
  \end{figure}
\end{landscape}

\section{Discussion, Conclusions and Further Research}
The effect of vegetation on the water circulation in a
continuum soil-plant-atmosphere can relatively be easy
described, but it is hardly quantifiable.  The canopy of
vegetation intercept and store a certain amount of rain, the
stem of plant generate a resistance force to the water
running on the soil surface and the plant roots modify the
water infiltration rate into soil.  To quantify these
physical processes and put them in a mathematical model is a
big challenge and it requires new mathematical concept,
experimental results and numerical simulations.

The mathematical model (\ref{swe_vegm_rm.02}) is simplified
as possible as can be to retain essential interaction
factors and to be mathematical tractable.  The mathematical
model (\ref{swe_vegm_rm.02}) belongs to the class of
hyperbolic systems and it has certain distinctive feature
that raises special difficulties when one elaborates a
numerical scheme to approximate its solution.  Namely,

(a) the system becomes singular for $h=0$;

(b) the hyperbolic system has a non-conservative form and
its coefficients depend on the space variable,
non-homogeneous hyperbolic system; 

(c) there can be a dependence of the system of the
derivatives of discontinuous functions, soil surface and/or
porosity functions; 

(d) the unknown water depth function $h(t,x)$ must be a
positive function.

\medskip We appreciate that the finite volume method for the
space discretization of the system is the most fitted
method.  It allows one to treat the non-conservative
product, see (\ref{fvm_2D_eq.01-02}), and then perform the
approximations (\ref{fvm_2D_eq.03}-3), (\ref{fvm_2D_eq.05}),
(\ref{fvm_2D_eq.05_varianta2}) and
(\ref{fvm_2D_eq.05_varianta3}).  By introducing the flux
terms (\ref{fvm_2D_eq.04}-1), (\ref{fvm_2D_eq.04}-2),
(\ref{fvm_2D_eq.06}), we were able to ensure the positivity
of $h$.

To solve the problem of the singularities introduced by
$h=0$, we perform a first order fractional time step method.
In the presence of frictional force or drag resistance
force, the second equation in the scheme
(\ref{fvm_2D_eq_frac.06}) is nonsingular even if $h=0$.

When validating using Riemann problem, Section
\ref{sect_RiemannProblem}, we should observe that for
defining the standing waves, one needs to augment the
shallow water equations with an additional principle: either
path connection or an energetic principle.

For us, testing the numerical scheme against the solution of
Riemann problem, given by a composition of simple waves, is
instructive in two ways.  In one way, it is interesting to
find out how the numerical scheme works in all the possible
situations: non-existence, uniqueness and non-uniqueness of
the solutions.  On the other way, the numerical scheme does
not ``see'' the second principle invoked in the construction
of the standing waves since it is based on the approximation
principles of the operators and functions appearing in the
shallow water equations.  Thus, when one compares the
numerical solution to the analytic solution, the question of
compatibility between the two kind of principles can be
raised.  However, the case of the Godunov type approximation
schemes is different since these schemes are obtained on the
basis of the (exact or approximate) solution of the Riemann
problem.  We observed that it does not matter whether the
quantity $(\theta h)^s_{(i,j)}$ is calculated using
(\ref{fvm_2D_eq.05_varianta2}) or
(\ref{fvm_2D_eq.05_varianta3}) in the cases of non-existence
or uniqueness of the solution.  But in the case of multiple
solutions, different approximation formulas of the gradient
of the free surface lead to different solutions.  Also, we
draw attention that by non-existence case we mean "{\it the
  non-existence of a composite wave as a solution of the
  Riemann Problem}", which is different from "{\it the
  non-existence of a solution of the Riemann Problem}".
Consequently, the numerical solution can be an approximation
of the true solution or can be a fictitious mathematical
object.

Note that the transition from the non-existence to the
existence but not uniqueness of the solution for Riemann
problem is controlled only by the size of the jump in
$\theta$.

The numerical simulations for the steady
solution case of a flow over a bump (Section
\ref{sect_FlowOveraBump}) suggest the following conjecture.
\begin{conjecture*}
  There is a value $q^M$ such that for any $q\in (0, q^M)$
  there are two threshold values
  $0.2\leq\eta_1(q)<\eta_2(q)$ such that:
  \begin{enumerate}
  \item the numerical solution asymptotically reaches a
    state $(h^{\star},u^{\star})$;
  \item the state $(h^{\star},u^{\star})$ casts into one of
    the following three distinct categories:
    \begin{enumerate}
    \item for any $\eta\in(0.2,\eta_1)$, it develops a
      moving shock which goes out of the domain $\Omega$
      after a while,
    \item for any $\eta\in(\eta_1,\eta_2)$, it has a steady
      shock located in the domain occupied by the bump,
    \item for any $\eta>\eta_2$, it is smooth (no shock).
    \end{enumerate}
  \end{enumerate}
\end{conjecture*}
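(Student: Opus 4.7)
\medskip\noindent
\textbf{Proof proposal.}
The plan is to combine a classical analysis of one-dimensional steady channel flow over a bump with the discrete energy estimate \eqref{fvm_2D_eq_frac.11} and the CFL-type time-step bound \eqref{fvm_2D_eq_frac.12}, exploiting the fact that the semi-discrete stationary points of \eqref{fvm_2D_eq.07} in 1D are essentially the solutions of the Bernoulli system \eqref{validatation.eq-01} glued together by hydraulic jumps satisfying Rankine--Hugoniot. First I would characterize the set of admissible steady profiles with inflow $hu|_{x=0}=q_0$ and free-discharge outflow. Introducing the critical depth $h_c(q_0)=(q_0^2/g)^{1/3}$ and the Bernoulli energy at the bump crest
\[
\xi_{\mathrm{cr}}(q_0)=\tfrac{3}{2}gh_c(q_0)+gz_{\max},
\]
the threshold $q^M$ is defined as the supremum of discharges for which a transcritical profile (subcritical approach, sonic passage at the crest, supercritical descent) is admissible given the geometry \eqref{validatation.eq-02}; for $q_0\in(0,q^M)$ a one-parameter family of steady profiles exists, indexed by the downstream depth.

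Next I would trace this family as the downstream depth $h_R$ increases and show it passes through three qualitatively distinct regimes separated by exactly two values of $\eta=z(L)+h_R$. The lower threshold $\eta_1(q_0)$ is defined as the smallest $\eta$ for which a stationary hydraulic jump fits on the descending branch of the bump (the conjugate-depth relation then places the jump at the downstream foot of the bump); below $\eta_1$ no stationary jump is geometrically admissible in the bump region, so any shock produced by the initial data is advected out of $\Omega$, leaving a supercritical tail. The upper threshold $\eta_2(q_0)$ is defined as the value at which the would-be jump reaches the crest and the supercritical branch collapses, leaving a fully smooth subcritical profile. Continuity and monotonicity of the conjugate-depth map in $\eta$ give $\eta_1<\eta_2$ and yield cases (a)--(c).

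The third and main step is the assertion that the numerical trajectory actually converges to one of these stationary configurations. Here I would argue as follows. By Proposition on $h$-positivity and the CFL bound, the discrete trajectory $(h^n,\boldsymbol{v}^n)$ remains in a bounded subset of the positive cone. Equation \eqref{fvm_2D_eq_frac.11} provides a dissipative balance: the internal contribution is a non-positive quadratic form in the velocity differences, while $TB$ is controlled by the fixed inflow datum $q_0$ and by the current outflow state. A LaSalle-type argument then confines the $\omega$-limit set to the zero-set of the internal dissipation, which by the structure of $\mu_{(i,j)}c_{(i,j)}$ and the one-dimensional geometry forces piecewise constant velocity between admissible jumps; this in turn forces membership in the family of steady profiles isolated above. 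The selection of the regime is then read off from $\eta$ via the free-discharge boundary condition \eqref{fvm_2D_eq.070}: for subcritical outflow the downstream depth remains pinned at its initial value, whereas for supercritical outflow any shock is swept through the right boundary.

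The hardest part will be this last convergence/selection step. The discrete energy dissipation in \eqref{fvm_2D_eq_frac.11} is degenerate (it vanishes on any state with uniform velocity regardless of $h$), so it does not by itself prevent drifting along the stationary manifold, and it is not obvious that a genuine Lyapunov function exists for the full scheme. A rigorous treatment probably requires a refined estimate using discrete analogues of the Bernoulli invariant $u^2/2+g(z+h)$ on smooth cells, together with a careful treatment of the jump cell to show that the jump location is attracted to the unique position dictated by conjugate-depth matching with the boundary. A complete proof must also handle the wet/dry issue near the right boundary in case (a) and clarify mesh and CFL dependence of $\eta_1,\eta_2$, which are expected to converge to their continuum counterparts as $\delta x\to 0$.
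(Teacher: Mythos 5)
There is a genuine gap, and you have in fact identified it yourself: the statement you are asked to establish is stated in the paper as a \emph{conjecture}, supported only by the numerical experiments of Figures~\ref{fig_validate_shm_shs_case1} and \ref{fig_validate_shm_shs_case2}; the paper offers no proof, and your proposal does not supply one either. Your first two steps (classifying the continuum steady profiles of \eqref{validatation.eq-01} over the bump \eqref{validatation.eq-02} via the critical depth and conjugate-depth matching, and locating two thresholds in $\eta$) are a sound description of the classical hydraulic picture, but the conjecture is about the \emph{discrete} dynamical system \eqref{fvm_2D_eq_frac.07}, so even this part needs a statement relating discrete stationary points to the Bernoulli profiles, plus an argument that $\eta_1,\eta_2$ survive discretization --- which you only gesture at.

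The decisive step, asymptotic convergence of the discrete trajectory to one of these states, is exactly where the proposal fails. The LaSalle argument you invoke has no Lyapunov function to work with: the paper states explicitly that the scheme \eqref{fvm_2D_eq_frac.07} does \emph{not} preserve monotonicity of the energy, and \eqref{fvm_2D_eq_frac.09} shows the energy increment contains the \emph{positive} terms $g\sum_i\theta_i\sigma_i(h_i^{n+1}-h_i^n)^2/2$ and $\sum_i\theta_i\sigma_i h_i^{n+1}|\boldsymbol{v}_i^{n+1}-\boldsymbol{v}_i^n|^2/2$. The artificial-viscosity correction in \eqref{fvm_2D_eq_frac.11} subtracts a quadratic form in velocity differences, but nothing in the paper (or in your argument) shows it dominates those positive contributions, and the boundary term $TB$ is not sign-controlled under the inflow condition \eqref{validatation.eq-03}. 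Moreover, as you concede, the dissipation vanishes on any uniform-velocity state, so even a genuinely dissipative estimate would not pin down the shock location. Since the part you label ``the hardest part'' is precisely the content of items 1 and 2 of the conjecture, the proposal is an honest research plan rather than a proof; it neither matches nor improves on the paper, which deliberately leaves the statement open.
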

Case (a) of the conjecture is illustrated on the first row
of pictures in Figures~\ref{fig_validate_shm_shs_case1} and
\ref{fig_validate_shm_shs_case2}; case (b) is also
illustrated on the second row of pictures in the same
figures.

A common problem in many countries, Romania being a typical
example, is land deforestation with severe flood effects.
The test proposed in Section~\ref{sect_FlowOverConvex} was
considered to emphasize the protective role of vegetative
barrier against such natural disasters.  Remark that in the
configuration with vegetation, the water does not anymore
cross the knoll since it loses part of its energy due to the
friction with plants, see Figure~\ref{fig_1Dknoll}.  Note
that our numerical method is not only well-balanced, but it
is capable to produce solutions that reach different
equilibrium states (depending on the initial data and the
values of the parameters).  It should be also remarked that
the porosity $\theta$ plays the role a control function to
get different equilibrium points: one or two lakes.

Water flow in hydrographic basins is more complicated due to
variations in soil topography and in vegetation diversity.
In contrast to the 1D cases where we were able to easily
simulate the soil data, we now need 2D land and vegetation
data to be closer to reality.  Data for soil surface can be
obtained from topographic maps, GIS, LiDAR etc.  For the
purpose of modeling, one needs an interface to transfer data
between different scales.  In this sense, we used the tool
built and described in \cite{sds-ADataPortingTool}.
Simulation on Paul's Valley (with no rain and infiltration)
was considered in Section~\ref{sect_SimulationonPaulValley}
as a first step to get closer to the interesting cases from
ecology, biology, hydrology and terrestrial sciences.  An
important variable for such a hydrographic basin is the
amount of water leaving the basin.  This variable can be
measured using the relative amount $q$ of water in the basin
at the moment of time $t$ given by
\begin{equation*}
  q(t) = \displaystyle\frac{\displaystyle\int_{\Omega}h(t,x){\rm d}x}{\displaystyle\int_{\Omega}h(0,x){\rm d}x}.
\end{equation*}
\begin{figure}[htbp]
  \centering
  \includegraphics[width=0.49\textwidth]{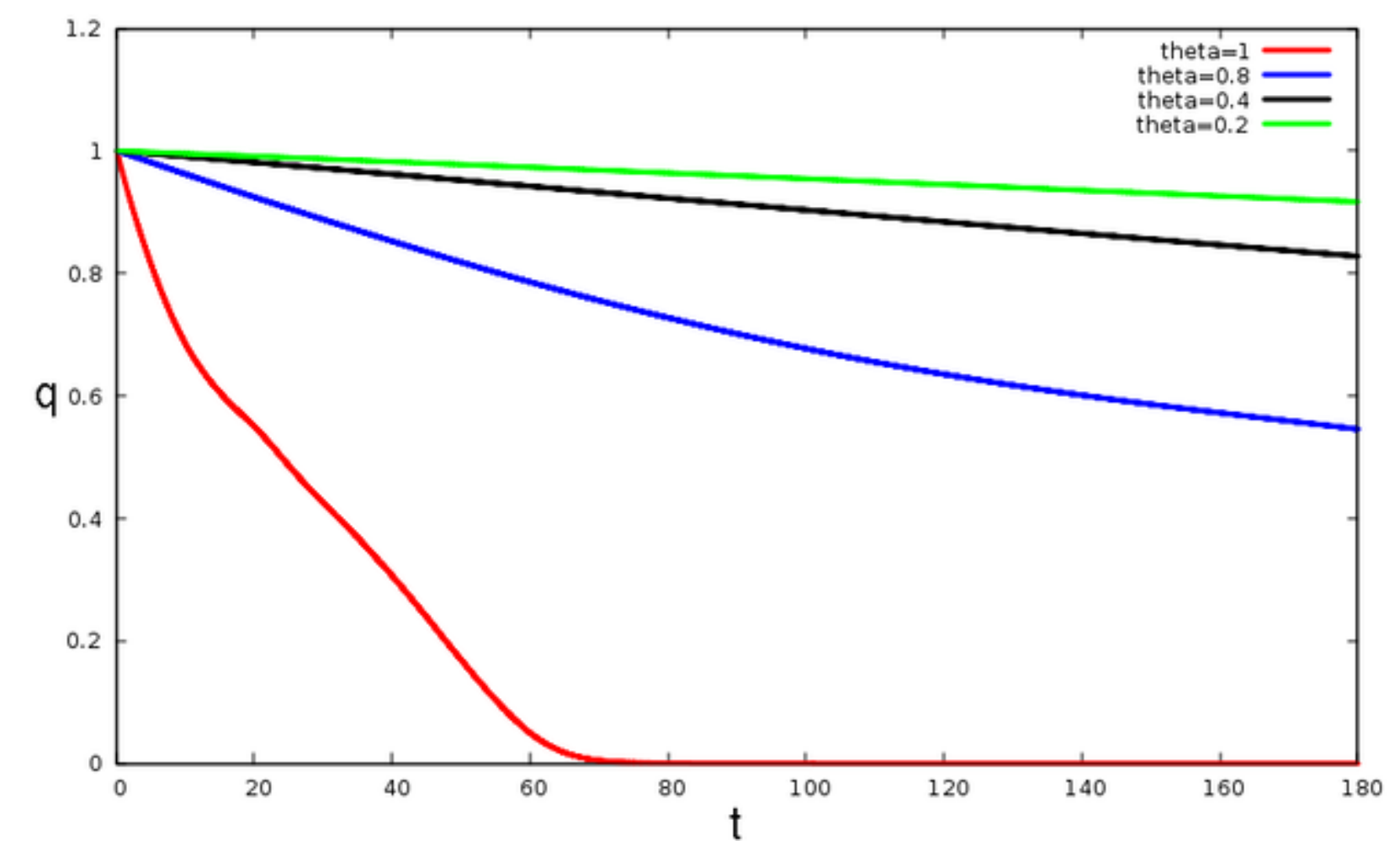}
  \includegraphics[width=0.49\textwidth]{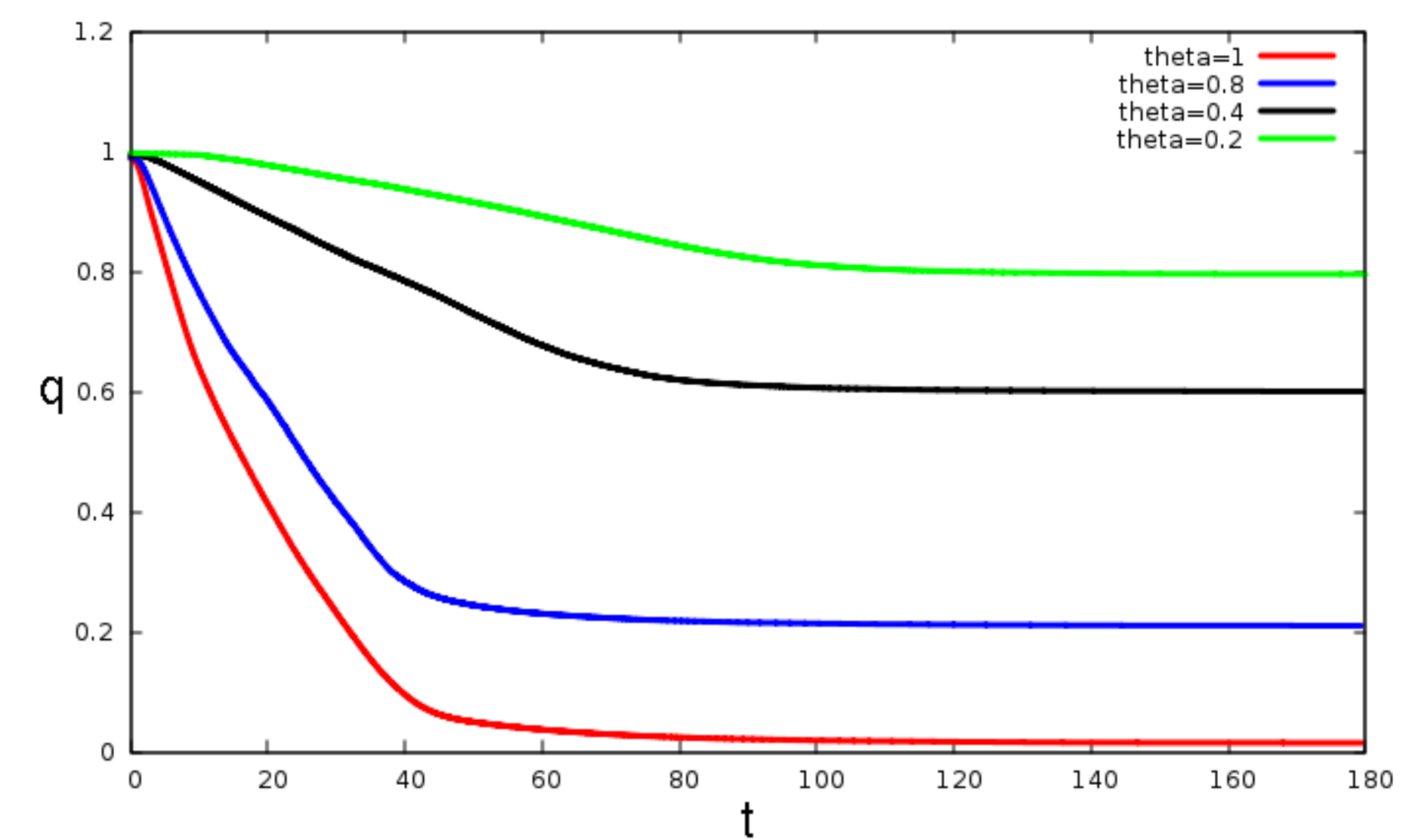}
  \caption{Time evolution of the water content in Paul's
    Valley hydrographic basin with ASTERIX (left picture)
    and CAESAR (right picture).}
  \label{fig_asterix_vs_caesar}
\end{figure}
Figure~\ref{fig_asterix_vs_caesar} shows the results for the
water content in Paul's Valley basin obtained with our
models ASTERIX and with CAESAR-Lisflood-OSE \cite{sds-ose}.

A general issue relates to whether higher cover plant
densities can prevent soil erosion and flood or not.  Both
pictures show that if the cover plant density is increasing
then the decreasing rate $\dot{q}$ of $q$ is smaller.  We
can think at a ``characteristic velocity'' of the water
movement in the basin and this velocity is in a direct
relation with $\dot{q}$.  We can now speculate that smaller
values of $\dot{q}$ imply softer erosion processes.

This valley belongs to Ampoi's catchment basin.  Flood
generally appears when the discharge capacity of a river is
overdue by the water coming from the river catchment area.
Our pictures show that higher cover plant densities imply
smaller values of $\dot{q}$ which in turn give Ampoi River
the time to evacuate the water amount flowing from the
valley.

As conclusions, we have presented a mathematical model able
to take into account the effects introduced by plants on the
water flow on soil surface and we have introduced a
numerical scheme to approximate its solution.  The numerical
scheme is of first order in time and space accurate and is
able to model a large class of physical phenomena, waves
propagation, backward wave reflection induced by obstacles
or filter band vegetation, oscillating flow, one or multiple
lake formation.  The scheme is relatively simple to be
implemented in a more complex model like soil erosion.

As current research, we work on elaborating a methodology
based on inverse method to estimate the hydrodynamics
parameters $\alpha_p$ and $\alpha_s$ from {\it in situ}
experimental measurements.

In the future, we intend to build up a coupled model --
shallow water - soil erosion -- to investigate the role of
plants to mitigate the transport of the polluted soil
particles from contaminated area to around area.

\section*{Acknowledgment}
This work was partially supported by a grant of the Ministry
of Research and Innovation, CCCDI-UEFISCDI, project number
PN-III-P1-1.2-PCCDI-2017-0721/34PCCDI/2018, within PNCDI
III.

\bibliographystyle{plain}
\bibliography{biblio_swmhpvh}

\end{document}